\let\newfloat\newfloat@ltx
\def\HC{\mathcal{H}}
\def\LC{\mathcal{L}}
\def\ad{^{\dagger}}
\newcommand{\fsnull}[1]{}
\newcommand{\old}[1]{}
\newcommand{\dya}[1]{\ket{#1}\!\bra{#1}}
\newcommand{\PC}{\mathcal{P}}
\newcommand{\XC}{\mathcal{X}}
\newcommand*{\id}{\openone}
\newcommand{\bs}{\textsf{BS}}
\def\be{\begin{equation}}
\def\ee{\end{equation}}
\def\bs{\begin{split}}
\def\e{\end{split}}
\def\ba{\begin{eqnarray}}
\def\bea{\begin{eqnarray}}
\def\tea{\end{eqnarray}}
\def\ea{\end{eqnarray}}
\def\eea{\end{eqnarray}}
\def\SU{\text{SU}}
\def\RB{\mathbb{R}}
\def\U{\mathrm{U}}
\newtheorem{theorem}{Theorem}
\newtheorem{proposition}{Proposition}
\newtheorem{definition}{Definition}
\begin{document}
\title{Group Fourier filtering of quantum resources in quantum phase space} 

\author{Luke Coffman}
\thanks{lukecoffman@fas.harvard.edu}
\affiliation{Theoretical Division, Los Alamos National Laboratory, Los Alamos, NM 87545, USA}
\affiliation{Harvard Quantum Initiative, Harvard University, Cambridge, Massachusetts 02138, USA}

\author{N. L. Diaz}
\affiliation{Information Sciences, Los Alamos National Laboratory, Los Alamos, NM 87545, USA}
\affiliation{Center for Non-Linear Studies, Los Alamos National Laboratory, 87545 NM, USA}

\author{Mart\'{i}n Larocca}
\affiliation{Theoretical Division, Los Alamos National Laboratory, Los Alamos, NM 87545, USA}
\affiliation{Quantum Science Center, Oak Ridge, TN 37931, USA}

\author{Maria Schuld}
\affiliation{Xanadu, Toronto, ON, M5G 2C8, Canada}

\author{M. Cerezo}
\thanks{cerezo@lanl.gov}
\affiliation{Information Sciences, Los Alamos National Laboratory, Los Alamos, NM 87545, USA}
\affiliation{Quantum Science Center, Oak Ridge, TN 37931, USA}

\begin{abstract}
  Recently, it has been shown that group Fourier analysis of quantum states, i.e., decomposing them into the irreducible  representations (irreps) of a symmetry group, enables new ways to characterize their resourcefulness. Given that quantum phase spaces (QPSs) provide an alternative description of quantum systems, and thus of the group's representation, one may wonder how such harmonic analysis changes. In this work we show that for general compact Lie-group  quantum resource theories (QRTs), the entire family of Stratonovich--Weyl quantum phase space representations—characterized by the Cahill--Glauber parameter $s$—has a clear resource-theoretic and signal-processing meaning. Specifically, changing $s$ implements a group Fourier filter that can be continuously tuned to favor low-dimensional irreps where free states have most of their support ($s=-1$), leave the spectrum unchanged ($s=0$), or highlight resourceful, high-dimensional irreps ($s=1$). As such, distinct QPSs constitute veritable group Fourier filters for resources.  Moreover, we show that the norms of the QRT's free state Fourier components completely characterize all QPSs. Finally, we uncover an $s$-duality relating the phase space spectra of free states and typical (Haar-random) highly resourceful states through a shift in $s$. Overall, our results provide a new interpretation of QPSs and promote them to a signal-processing framework for diagnosing, filtering, and visualizing quantum resources.
\end{abstract}

\maketitle

Understanding and characterizing the resourcefulness of quantum states is fundamental for quantum information sciences to analyze its usefulness within a given context. For instance, entanglement can enable beyond-classical communication and computations~\cite{nielsen2000quantum}, and non-stabilizerness promotes Clifford circuits to universal quantum computing~\cite{veitch2014resource,howard2017application,chitambar2019quantum, leone2022stabilizer}. Quantum resource theories (QRTs)~\cite{chitambar2019quantum,brandao2015reversible,diaz2025unified} provide a mathematical framework to quantify the resourcefulness of a state (within a given paradigm~\cite{deneris2025analyzing}), the type and the structure of the resource~\cite{dur2000three}, as well as how resources can be interconverted.

Given the intrinsic connection between QRTs and group theory, as the ``free'' resourceless operations of a given QRT are usually defined in terms of a representation $T$ of a group $\mathbb{G}$, tools from representation theory have become fundamental in resource quantification. These include the computation of projections of a state onto the elements of a Lie group (i.e., characteristic functions)~\cite{gu1985group,korbicz2006group,korbicz2008entanglement,marvian2013theory} or onto the elements of a Lie algebra or preferred subset of observables~\cite{barnum2003generalizations,barnum2004subsystem,klyachko2002coherent,delbourgo1977maximum, meyer2002global,brennen2003observable,beckey2021computable,schatzki2022hierarchy,balachandran2013entanglement,harshman2011observables,zanardi2001virtual,zanardi2004quantum,nha2006entanglement,alicki2009quantum,viola2010entanglement,derkacz2011entanglement,gigena2015entanglement,benatti2016entanglement,regula2017convex,sindici2018simple,gigena2021many,guaita2021generalization,ahmad2022quantum}, as well as the analysis of group-invariant quantities~\cite{grassl1998computing,barnum2001monotones,miyake2003classification,leifer2004measuring,mandilara2006quantum,klyachko2007dynamical,oszmaniec2013universal,bravyi2019simulation,larocca2022group,meyer2023exploiting,nguyen2022atheory,skolik2022equivariant}. In  particular, the recent work of~\cite{bermejo2025characterizing} proposed performing group harmonic analysis of quantum states, i.e., decomposing them into the irreducible representations (irreps) of the different vector spaces admitting a representation of the group of free operations (for a different characterization using group Fourier analysis see~\cite{marvian2014modes}). Such study revealed the intriguing fact that free states--those obtained from an easy-to-prepare state  via free operations--live in low-dimensional irreps, whereas highly resourceful states tend to shift their weight towards higher-dimensional ones. This result thus closely matches the expected behavior observed in classical harmonic analysis, where complex functions tend to live in higher dimensional or momentum irreps, and indicates a sort of universality that persists across both classical and quantum realms. While Ref.~\cite{bermejo2025characterizing} studied resourcefulness based on the standard description of quantum mechanics (i.e., a Hilbert space $\HC$, where quantum states are described by normalized positive definite matrices $\rho$ belonging to the space of linear operations $\LC(\HC)$), one may wonder if such investigation can be repeated across alternative descriptions of quantum mechanical systems, such as those obtained via quantum phase spaces (QPSs).

Indeed, we recall that QPSs provide an alternative
description of quantum mechanics in which states and operators are represented by functions on a classical-like manifold, equipped with quasi-probability distributions such as the Wigner, Husimi $Q$, or Glauber–Sudarshan $P$ functions~\cite{wigner1932quantum,cahill1969ordered}. In the general group-theoretic setting~\cite{brif1999phase}, the Stratonovich--Weyl (SW) correspondence associates to each operator $A\in \LC(\HC)$ a real function $F_A(\Omega,s)$, parametrized by the Cahill--Glauber  parameter $s$, on a homogeneous space $\XC=\mathbb{G}/\mathbb{K}$. Here $\mathbb{G}$ is a symmetry group with unitary representation $T$ and $\mathbb{K}$ is the stabilizer group of a distinguished reference state (typically the highest-weight, or coherent, state)~\cite{perelomov1977generalized,zhang1990coherent,brif1999phase}. QPSs have recently gained considerable attention in modern quantum computation and information~\cite{sanchez2025phase} as they also enable new viewpoints for quantum machine learning~\cite{heightman2025quantum, chabaud2024phase} and the study of classical simulability for quantum systems~\cite{ipek2025phase}.

From the previous, it is clear that QPSs and QRTs  have a striking similarity: they are built from the exact same ingredients, a group of free operations and a reference free state. This connection has motivated a series of works exploring whether resourcefulness can be detected directly from
phase space representations~\cite{ferrie2011quasi}. In continuous-variable settings, Wigner negativity and related quasi-probability features have been identified as indicators of nonclassicality and as necessary resources for quantum computational advantage, leading to explicit resource theories of Wigner negativity and non-Gaussianity~\cite{veitch2014resource,albarelli2018resource,rahimi2013measurement}. More broadly, Wigner negativity has been linked to contextuality and other operational notions of quantum advantage~\cite{booth2022contextuality}. In finite dimensions, the QPS representations such as the Wigner function have been used to characterize nonclassicality and entanglement ~\cite{tilma2016wigner,veitch2012negative}, although for phase-spaces of finite dimensional systems negativity can already appear for free states and does not align cleanly with standard resourcefulness quantifiers in those QRTs ~\cite{davis2021wigner,ferrie2008frame, zhu2016quasiprobability}. 

Despite the clear connections between QRTs and QPSs, the underlying connection between the Hilbert space of operators
$\LC(\HC)$ and the space of square integrable functions $L^2(\XC)$ at fixed $s$---both
carrying covariant actions of $\mathbb{G}$ and both decomposing into the same
irreducible representations---has not been systematically exploited from the viewpoint of group harmonic analysis.  In particular, it remains unclear how the group Fourier structure that organizes resources in $\LC(\HC)$ is reflected,
mode by mode, in the different QPS representations. Addressing this gap is the main goal of the present work.

Our main contributions are as follows (see Fig.~\ref{fig:schematic}). First, we show that for any compact Lie-group–covariant QRT (arising from a semi-simple Lie algebra) the SW kernels and all associated QPS representations are completely determined by the group Fourier (irrep) norms (or purities) of the reference free state. Second, we prove that the phase space kernel acts as a genuine group Fourier filter on these purities. That is, the purity in operator space and that in phase space are related by a simple irrep-dependent coefficient which arises from the free state purities. Not only can such filters kill off certain irrep components, but the Cahill--Glauber parameter $s$ interpolates between free-adapted, low-pass phase space representations and resource-sensitive, high-pass ones. For $s=-1$ the information of near-free states is preserved when going into the QPS, whereas that of high-resource states is mostly suppressed. The converse occurs for $s=1$, as here the information of free and near-free states is  suppressed, while the components of resourceful states are amplified. In this way, representing states through irrep purities and a tunable group-Fourier filter is not merely formal, as it tells us which symmetry modes carry resource, and how a given choice of $s$ makes a phase space representation free-friendly and smooth ($s=-1$) or resource-sensitive and highly structured ($s=1$). Choosing $s$ is therefore a design choice that trades off classical interpretability against sensitivity to quantum resources. Third, we identify an $s$-duality relating the phase space spectra of free states and typical (Haar-random) highly resourceful states. That is, we show that the phase space spectra of free states and typical (Haar-random) highly resourceful states can be related through a shift in $s$. Then we  illustrate how the QPS filtering picture clarifies when QPS negativity is or is not a faithful indicator of resource across several QRTs, including spin coherence, multipartite entanglement, and fermionic Gaussianity. 

\section{Background}

\begin{figure}
    \centering
    \includegraphics[width=1\columnwidth]{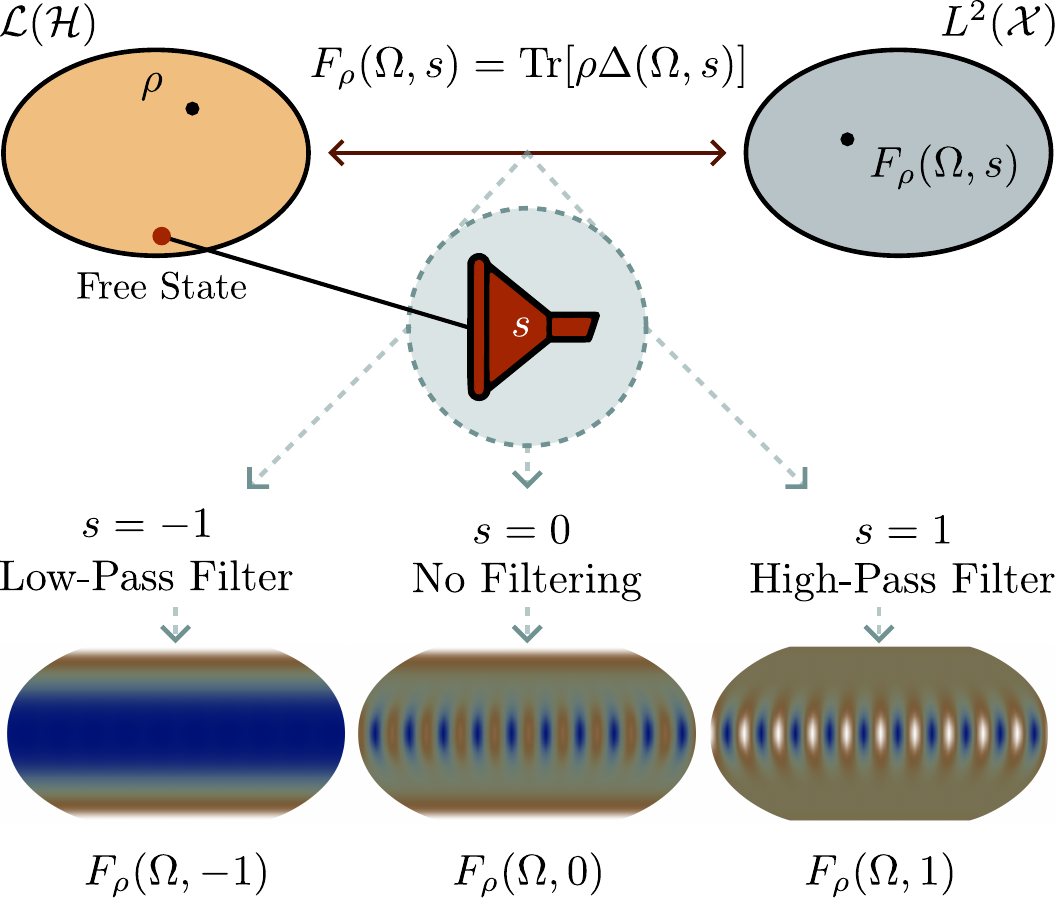}
    \caption{\textbf{Schematic representation of our main results.} We show that the SW kernels $\Delta(\Omega,s)$ serve as a signal-processing-like Fourier filter for the irrep components of quantum operators, with the Cahill--Glauber parameter $s$ interpolating between low-pass ($s=-1$) and high-pass ($s=1$) actions. Moreover, all the properties of the kernel are fully determined by the QRT's free states, further unraveling the deep connection between QPSs and QRTs. The illustrated QPS corresponds to the Robinson projection of the sphere $S^2$ for a Greenberger–Horne–Zeilinger (GHZ) state in the spin coherence QRT. See below for more details. }
    \label{fig:schematic}
\end{figure}

In this section we present several important concepts that will be used throughout the work. First, we will review well-known results of how, under the adjoint action of a unitary group, the space of linear operators of a quantum's Hilbert space decomposes into subspaces called irreducible representations (irreps). Then, we will review the results of~\cite{bermejo2025characterizing} where we defined projections and purities onto the irreps of operator space. Next, we briefly present the basic ingredients for QRTs~\cite{chitambar2019quantum}. Finally, we will recall the SW correspondence to build QPSs, and present as well the irrep decomposition of functions over phase space. In the next section we will see how all these concepts are intimately related.

\subsection{Group Fourier decomposition of the operator space}

Let $\HC=\mathbb{C}^d$ be a $d$-dimensional Hilbert space, and let $T$ be an unitary irreducible representation\footnote{The generalization to reducible representations is straightforward by focusing on each irrep and applying the formalism described here.} of a compact Lie group $\mathbb{G}$ with associated semi-simple Lie algebra $\mathfrak{g}$ spanned by the skew-hermitian operators $\{h_i\}_{i=1}^{\dim(\mathfrak{g})}$.  

We are interested in characterizing the action of the unitaries in $T(\mathbb{G})$ over elements of the space of linear operators $\LC(\HC)$, i.e., 
\begin{equation}\label{eq:action}
    T(g)AT\ad(g)\,, \text{ for $g\in\mathbb{G}$ and $A\in\LC(\HC)$}.
\end{equation}
For this purpose, we recall the following result~\cite{serre1977linear,fulton1991representation}. 
\begin{definition}[Irrep decomposition of $\LC(\HC)$]\label{def:irreps-LH}
   The action of $T(\mathbb{G})$ induces a decomposition of $\LC(\HC)$ into irreps as
\begin{equation}
  \LC(\HC)
  =\HC\otimes \HC^{*}\cong\bigoplus_{\lambda}V_{\lambda}\,,
\end{equation}
where the index $\lambda$ is finite, and runs over  irreps and multiplicity label. For each $V_\lambda \subseteq \LC(\HC)$ we define a Hermitian orthonormal basis
\begin{equation}
  V_\lambda={\rm span}_{\mathbb{C}}\{D_j^{\lambda}\}_{j=1}^{d_\lambda}, 
  \end{equation}
  with
  \begin{equation}
  \left\langle D_j^{\lambda},D_{j'}^{\lambda'}\right\rangle_{\LC}
  =\delta_{\lambda\lambda'}\,\delta_{jj'}\,,
\end{equation}
where $d_\lambda=\dim(V_\lambda)$, and where $\langle A,B\rangle_{\LC}=\Tr[A\ad B]$ denotes the Hilbert-Schmidt inner product for the operators $A,B\in\LC(\HC)$. 
\end{definition}
Definition~\ref{def:irreps-LH} implies that the space of linear operators decomposes into invariant subspaces under the action of the group. As such, given some $A\in V_\lambda$, then $\forall g\in\mathbb{G}$ and one has $T(g)AT\ad(g)\in V_\lambda$. Thus, we can simplify the study of Eq.~\eqref{eq:action} by analyzing how the group acts on each irrep.

Indeed, given some unitary $T(g)$, which we assume for simplicity to be of the form  $T(g)=e^{\theta\, T(h)}$ for $h\in\mathfrak{g}$, we can explicitly build its action over $V_\lambda$ by studying how it acts over a basis of such a subspace, i.e., by the transformation $T(g)D_j^{\lambda} T\ad(g)$. This is achieved via the so-called adjoint and Adjoint representations on the $\lambda$-th irrep, defined as follows.  
\begin{definition}[adjoint representation of $\mathfrak{g}$ in $V_\lambda$]\label{def:adjoint-rep}
    Given an operator $h$ in $\mathfrak{g}$, the adjoint representation in the irrep $V_\lambda$ is the map $\Phi_{\rm ad}^\lambda:\mathfrak{g}\rightarrow \mathbb{R}^{d_\lambda\times d_\lambda}$ and is defined by
    \begin{equation}
        \left(\Phi_{\rm ad}^\lambda(h)\right)_{jj'}=\Tr\left[iD_{j'}^{\lambda}\left[dT(h),i D_j^{\lambda}\right]\right]\,,
    \end{equation}
    where $dT$ id the Lie algebra homomorphism induced by $T$. 
\end{definition}
Since $\mathfrak{g}$ is compact, then $\Phi_{\rm ad}^\lambda$ is faithful~\cite{somma2005quantum}. Then, the adjoint representation of the Lie algebra induces--through the  exponential map--a representation of the elements of $\mathbb{G}$ on  $V_\lambda$. 
\begin{definition}[Adjoint representation of $\mathbb{G}$ in $V_\lambda$]\label{def:Adjoint-rep}
    Given a group element $g=e^{\theta h}$ of $\mathbb{G}$, where $h\in \mathfrak{g}$,  $\Phi_{\rm ad}^\lambda$ induces the unitary Adjoint representation  $\Phi_{\rm Ad}^\lambda:\mathbb{G}\rightarrow \mathbb{U}\left(\mathbb{R}^{d_\lambda}\right)$ given by
    \begin{equation}
        \Phi_{\rm Ad}^\lambda\left(g=e^{\theta \,h}\right)=e^{\theta \,\Phi_{\rm ad}^\lambda(h)}\,.
    \end{equation}
\end{definition}
Note that, due to linearity of the adjoint representation the Adjoint representation of some $ \Phi_{\rm Ad}^\lambda\left(e^{\sum _i\theta_i \,h_i}\right)$ is simply given by $e^{\sum _i\theta_i \,\Phi_{\rm ad}^\lambda(h_i)}$. Importantly, Definitions~\ref{def:adjoint-rep} and~\ref{def:Adjoint-rep} imply
\begin{equation}\label{eq:basistransf}
    T(g)D_j^{\lambda} T\ad(g)= \sum_{j'}\left(\Phi_{\rm Ad}^\lambda\left(g\right)\right)_{jj'} D_{j'}^{\lambda}\,,
\end{equation}
which faithfully determine how $\mathbb{G}$ acts on each  irrep  of $\LC(\HC)$. Then, via linearity we can fully characterize the transformation of Eq.~\eqref{eq:action}.

At this point, we find it important to note that the notions of adjoint and Adjoint representations may appear
technical at first glance, but they are exactly what allows us to
organize operator space $\LC(\HC)$ into symmetry sectors. In particular,
the adjoint action of $\mathbb{G}$ on $\LC(\HC)$ is the one under which both the GFD purities and the SW kernels decompose into irreps. Making this action explicit now will let us track, later on, how each irrep
contributes to resource measures in $\LC(\HC)$ and to phase-space
functions.

Next, we will use  Definition~\ref{def:irreps-LH} to define the group Fourier decomposition (GFD) of an operator  $A\in\LC(\HC)$ as the collection of its projections into each irrep. Moreover, we dub the Hilbert-Schmidt inner product of such projections as the GFD purities. 
\begin{definition}[GFD and purities of operators in $\LC(\HC)$]\label{def:GFD-purities-LH}
    Given a linear operator $A$ in $\LC(\HC)$, its projection onto the $\lambda$-th irrep is given by 
    \begin{equation}
        A_{\lambda}=\sum_{j=1}^{d_\lambda} \left\langle D_j^{\lambda},A\right\rangle_{\LC} D_j^{\lambda}\,.
    \end{equation}
 The GFD purity on the $\lambda$-th irrep  is therefore obtained as
\begin{align}
    \PC_\lambda(A)&=\langle A_{\lambda},A_{\lambda}\rangle_{\LC}=\sum_{j=1}^{d_\lambda}|\left\langle D_j^{\lambda},A\right\rangle_{\LC}|^2\nonumber\\
    &=\sum_{j=1}^{d_\lambda}| \Tr[D_j^{\lambda}A]|^2\,.\label{eq:purity}
\end{align}
\end{definition}
One can readily verify that since $A=\sum_\lambda A_{\lambda}$, then $\sum_\lambda \PC_\lambda(A)=\langle A,A\rangle_{\LC}$, indicating that if $A$ is normalized, then the collection of purities $\PC_\lambda(A)$ forms a probability distribution. Moreover, a straightforward calculation reveals that 
\begin{equation}\label{eq:purity-inv}
    \PC_\lambda(T(g)AT\ad(g))=\PC_\lambda(A)\,,\quad \forall g\in\mathbb{G}\,,
\end{equation}
meaning that the GFD purities are invariant under conjugation by the unitaries in $T(\mathbb{G})$. 

As we discuss in the next section, the GFD purities have been shown to be relevant within the context of QRTs~\cite{bermejo2025characterizing}. Moreover, we will later see in this paper that they are also central to understand QPSs as group Fourier processing filters.

\subsection{Quantum resource theories}

Here we will define a QRT~\cite{chitambar2019quantum,brandao2015reversible} over $\HC$ in terms of two basic ingredients: a set of operations which are assumed to be easy and free to implement, and a reference state which is assumed  to be resourceless. Naturally, all the states that can be obtained by evolving such reference state  from free operations are also considered to be resourceless. For our purposes, we will consider QRTs with unitary, i.e., resource-preserving, free operations.
\begin{definition}[Free operations and states in Lie-algebraic QRTs]\label{def:QRTs}
We define the free operations of a QRT as given by the irreducible representation $T$  of a compact group $\mathbb{G}$. Then, we take the highest weight state of $T(\mathfrak{g})$ as the reference state, which we denote as $\ket{{\rm hw}}$, such that the set of all free states is the orbit $\{T(g)\ket{{\rm hw}}\,, g\in \mathbb{G}\}$.  
\end{definition}
Note that in Definition~\ref{def:QRTs}, we have chosen the reference state to be the highest weight of the algebra.  While a priori this choice seems arbitrary, it turns out that most QRTs are implicitly picking this state~\cite{bermejo2025characterizing}. Indeed, such choice makes the set of free states correspond to the so-called generalized coherent states~\cite{barnum2003generalizations,barnum2004subsystem,perelomov1977generalized,gilmore1974properties,zhang1990coherent}. This also guarantees that free states minimize generalized uncertainty relations~\cite{delbourgo1977maximum,barnum2004subsystem} and that their orbits possess underlying K\"ahler structures~\cite{kostant1982symplectic}, somewhat making them the ``most classical'' set of states.  

A central problem within QRTs is the identification, quantification and characterization of the resource present in a state $\ket{\psi}$ which is not free. For instance, the coherent state fidelity~\cite{regula2017convex,bravyi2019simulation,reardon2024fermionic}
\begin{align}\label{eq:fidelity}
    S(\ket{\psi})&=\max_{g\in\mathbb{G}}\left\langle\dya{\psi},T(g)\dya{{\rm hw}}T\ad(g)\right\rangle_{\LC}\nonumber\\
    &=\max_{g\in\mathbb{G}}|\langle {\rm hw} |T(g)|\psi\rangle|^2\,,
\end{align}
simply compares $\ket{\psi}$ with all coherent states, and is equal to one if $\ket{\psi}$ can be obtained by evolving the reference state with some free operation. In addition, there exist irreps in $\LC(\HC)$  whose purities serve as resource quantifiers~\cite{barnum2003generalizations,barnum2004subsystem}. For instance, when $\mathfrak{g}$ is simple  one can verify that there is some $\lambda'$ for which $\LC_{{\lambda}'}={\rm span}_{\mathbb{C}}\{dT(\mathfrak{g})\}$ and
\begin{equation}\label{eq:g-purity}
    \PC_{{\lambda}'}(\dya{\psi}) \text{ is maximized iff } \ket{\psi}=T(g)\ket{{\rm hw}}
\end{equation}
for some $g\in \mathbb{G}$. 

While Eqs.~\eqref{eq:fidelity}--\eqref{eq:g-purity} can determine whether a state is resourceful or not, the fact that they are single scalar quantities greatly limits the amount of information one can extract from them.  To mitigate this issue, the recent work of Ref.~\cite{bermejo2025characterizing} showed that the whole collection of GFD purities carries useful information regarding the structure of the resourcefulness in a given state. Notably, it was shown that free states tend to be supported in lower-dimensional irreps, whereas resourceful states have support in more, and higher-dimensional ones. Such behavior closely mimics that which is observed in classical harmonic analysis.

\subsection{Quantum phase spaces}

We begin by describing the basic ingredients of QPSs. Notably,  their two basic ingredients are exactly the same as those that define QRTs with resource-preserving operations: a unitary representation $T$ of a  group $\mathbb{G}$, and a reference state. Here we will focus on the case when $\mathbb{G}$ is a compact Lie group, and the reference state is the highest weight $\ket{{\rm hw}}$ of the associated Lie algebra.

As in QRTs a central object of QPSs is the orbit $\{T(g)\ket{{\rm hw}}\,, g\in \mathbb{G}\}$, but its study is refined by defining the subgroup $\mathbb{K}\subseteq \mathbb{G}$ which  stabilizes the reference state, i.e.,  $\mathbb{K}$ is composed of  all the group elements $k$ such that $T(k)$ leave $\ket{{\rm hw}}$ invariant up to a phase
factor.  Specifically, one has
\begin{equation}
    T(k)\ket{{\rm hw}}=e^{\phi(k)}\ket{{\rm hw}}\,,\quad \forall k \in \mathbb{K}\,.
\end{equation}
Indeed, for all group elements $g\in \mathbb{G}$ there exists a decomposition in terms of left cosets as
\begin{equation}
    g=\Omega k\,, \text{ where } k\in \mathbb{K}\,, \text{ and } \Omega\in \mathbb{G}/\mathbb{K}.
\end{equation} 
Since any pair of elements, $g=\Omega k$ and $g'=\Omega k'$, will lead to the same state (up to a global phase), we identify free states with points $\Omega$ in $\XC:=\mathbb{G}/\mathbb{K}$ such that
\begin{equation}\label{eq:Omega-state}
  \ket{\Omega}:= T(\Omega)\ket{\rm hw}=T(g)\ket{\rm hw}
\end{equation}
where $g=\Omega k$ for some $k\in \mathbb{K}$. From here, one defines the QPS as the homogeneous space $\XC:=\mathbb{G}/\mathbb{K}$:
\begin{definition}[Quantum phase space]\label{def:QPSs}
    Given a representation $T$ of a group $\mathbb{G}$, a reference state $\ket{{\rm hw}}$ and its associated stabilizer group $\mathbb{K}\subseteq \mathbb{G}$, we define the QPS as $\XC=\mathbb{G}/\mathbb{K}$.
\end{definition}

Bite that irreducibility of $T$ indicates that the set of free states forms an overcomplete basis of $\HC$ as
\begin{equation}\label{eq:completness}
  \displaystyle\int_{\XC}d\mu(\Omega)\dya{\Omega}=\id\,,
\end{equation}
where $d\mu$ indicates the uniform invariant measure over the phase space $\XC$, normalized so that the right-hand-side of Eq.~\eqref{eq:completness} is the identity matrix.

Starting from Definition~\ref{def:QPSs} we can follow the SW correspondence~\cite{brif1999phase} which states that the linear injective mappings between the space of operators on the Hilbert
space $\LC(\HC)$ and that of square-integrable functions on the QPS $\XC$, denoted as  $L^2(\XC)$, satisfy the following properties:

\begin{definition}[Properties of phase space functions]\label{def:QPS-function}
  Let $A\in\LC(\HC)$ and $\Omega\in\XC$. Then, any linear  phase space function  $F_{A}(\Omega,s)$ parametrized by the real-valued Cahill--Glauber parameter $s$ satisfies the following properties 
     \begin{enumerate}
  \item[(i)] Reality: $F_{A\ad}(\Omega,s)=F_A^*(\Omega,s)$,
  \item[(ii)] Standardization: $\displaystyle\int_{\XC}d\mu(\Omega)\,F_A(\Omega,s)=\Tr[A]$,
  \item[(iii)] Covariance: $F_{T\ad(g) AT(g)}(\Omega,s)=F_A(g\cdot \Omega,s)$,
  \item[(iv)] Tracing: 
    $\displaystyle\int_{\XC} d\mu(\Omega)F_{A^\dagger}(\Omega,s)F_B(\Omega,-s)\!=\!\Tr[A\ad B]$.
\end{enumerate}
\end{definition}

Note that the tracing property (guaranteed by Eq.~\eqref{eq:completness}) can be expressed as
\begin{equation}\label{eq:sms-inner}
  \langle F_A(\cdot,s),F_B(\cdot,-s)\rangle_{L^2}=\langle A,B\rangle_{\LC}\,,
\end{equation}
 with 
 \begin{equation}
   \langle F,G\rangle_{L^2}=\displaystyle\int_{\XC}d\mu(\Omega)F^*(\Omega)G(\Omega)\label{eq:inner-prody-L2}
 \end{equation}
Next, we remark that Riesz's representation theorem states that for any $\Omega$ there always exists an operator $\Delta(\Omega,s)\in\LC(\HC)$, called the SW kernel, such that 
\begin{equation}\label{eq:PS-function}
  F_A(\Omega,s)=\langle \Delta(\Omega,s),A\rangle_{\LC}=\Tr[\Delta(\Omega,s)^\dagger A]\,.
\end{equation}
Since the tracing property holds $\forall B\in\LC(\HC)$, and thanks to injectivity, then we also have
\begin{equation}\label{eq:QPS-to-HC}
    A=\displaystyle\int_{\XC} d\mu(\Omega)F_A(\Omega,s)\Delta(\Omega,-s)\,.
\end{equation}
The properties of phase space functions enumerated in Definition~\ref{def:QPS-function} induce properties on the SW kernels, as explained below.

\begin{definition}[Properties of the SW kernels]\label{def:SW-kernel}
Let $\Delta(\Omega,s)$ be the kernel associated with the SW phase space mapping parametrized by $s$. Then, $\Delta(\Omega,s)$ satisfies the following properties 
\begin{enumerate}
  \item[(i)] Hermiticity: $\Delta(\Omega,s)^\dagger=\Delta(\Omega,s)$.
  \item[(ii)] Resolution: $\displaystyle\int_{\XC}d\mu(\Omega)\,\Delta(\Omega,s)=\id$.
  \item[(iii)] Covariance:     $\Delta\bigl(g\!\cdot\!\Omega,s\bigr)
     =T(g)\,\Delta(\Omega,s)\,T\ad(g).$
\end{enumerate}
\end{definition}
Because QPSs are not unique (e.g., different $s$ parameters in the phase space functions lead to different representations of an operator $A$ in $L^2(\XC)$) we can relate different representations as
\begin{equation}\label{eq:map-Fs}
F_A(\Omega,s)=\displaystyle\int_{\XC} d\mu(\Omega')K_{s,s'}(\Omega,\Omega')F_A(\Omega',s')\,,
\end{equation}
where
\begin{equation}\label{eq:kernel-Deltas}
    K_{s,s'}(\Omega,\Omega')=\Tr[\Delta(\Omega,s)\Delta(\Omega',-s')]\,.
\end{equation}
Indeed, Eq.~\eqref{eq:kernel-Deltas} directly implies the relation
\begin{equation}\label{eq:Delta-to-Delta}
    \Delta(\Omega,s)=\displaystyle\int_{\XC} d\mu(\Omega') K_{s,s'}(\Omega,\Omega')\Delta(\Omega',s')\,,
\end{equation}
where we can see that setting $s=s'$ indicates that $K_{s,s}(\Omega,\Omega')$ plays the role of a delta function over the QPS $\XC$.

Since the homogeneous space $\XC$ was originally constructed from the Lie group $\mathbb{G}$, it naturally admits an action of $\mathbb{G}$ via translations, and hence so does phase space $L^2(\XC)$. Therefore, just as we can decompose $\LC(\HC)$ into irreps, we can do the same for $\LC^2(\XC)$ via the Peter-Weyl theorem~\cite{folland2016course}:
\begin{theorem}[Irrep decomposition of $L^2(\XC)$]\label{def:irreps-XC}
   The action of $\mathbb{G}$ induces a decomposition of $L^2(\XC)$ into irreps  as
\begin{equation}\label{eq:irrep-L2X}
  L^2(\XC)\cong\bigoplus_{\sigma}W_{\sigma}\,,
\end{equation}
where the index $\sigma$ runs over all infinitely many irreps and multiplicity label. For each $W_\sigma$ we define an orthonormal basis of harmonic functions
\begin{equation}
  W_\sigma={\rm span}_{\mathbb{C}}\{Y_j^{\sigma}(\Omega)\}_{j=1}^{d_\sigma},
  \end{equation}
  with
  \begin{equation}
  \left\langle Y_j^{\sigma},Y_{j'}^{\sigma'}\right\rangle_{L^2}
  =\delta_{\sigma\sigma'}\,\delta_{jj'}\,,\label{eq:ortogonal-Y}
\end{equation}
where $d_\sigma=\dim(W_\sigma)$, and where $\langle A,B\rangle_{L^2}$ denotes the inner product over $L^2(\XC)$ defined in~\eqref{eq:inner-prody-L2}. 
\end{theorem}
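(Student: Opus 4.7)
\emph{Proof proposal.} The plan is to reduce the claim to the classical Peter--Weyl theorem and then pass to $\mathbb{K}$-invariants. First, I would equip $L^2(\XC)$ with the unitary left-translation action $(L_g f)(\Omega)=f(g^{-1}\!\cdot\!\Omega)$; unitarity is immediate from $\mathbb{G}$-invariance of the measure $d\mu$, and compactness of $\mathbb{G}$ ensures that this representation is strongly continuous, so by the general theory of compact-group representations it decomposes discretely as a Hilbert direct sum of irreps.

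Next, I would identify $L^2(\XC)=L^2(\mathbb{G}/\mathbb{K})$ with the closed subspace of right-$\mathbb{K}$-invariant functions in $L^2(\mathbb{G})$ via the pullback $f(\Omega)\mapsto \tilde f(g)=f(g\mathbb{K})$; this pullback is an isometry when $d\mu$ is chosen to be the pushforward of the normalized Haar measure on $\mathbb{G}$, and it intertwines the left-$\mathbb{G}$ actions on both sides. Applying the Peter--Weyl decomposition $L^2(\mathbb{G})\cong\bigoplus_\lambda V_\lambda\otimes V_\lambda^{*}$, with $\mathbb{G}$ acting by left translation on the first tensor factor and by right translation on the second, and then taking right-$\mathbb{K}$-invariants, I obtain $L^2(\XC)\cong\bigoplus_\lambda V_\lambda\otimes (V_\lambda^{*})^{\mathbb{K}}$. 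Each summand is thus a left-$\mathbb{G}$-module isomorphic to $\dim (V_\lambda^{*})^{\mathbb{K}}$ copies of $V_\lambda$, which I would repackage into the labeling $\bigoplus_\sigma W_\sigma$ where $\sigma$ ranges over pairs (irrep, multiplicity). The orthonormal basis $\{Y_j^\sigma\}$ of each $W_\sigma$ can then be fixed by choosing a Hilbert basis inside each isotypic copy, and orthogonality across distinct $\sigma$ is a direct consequence of the Schur orthogonality relations that accompany Peter--Weyl.

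The main technical point is not the abstract decomposition, which is standard, but rather pinning down which irreps genuinely appear with nonzero multiplicity given that $\ket{\rm hw}$ is the highest-weight coherent state and $\mathbb{K}$ its stabilizer. For this I would appeal to the Cartan--Helgason (or Borel--Weil) classification of ``spherical'' representations of the pair $(\mathbb{G},\mathbb{K})$, which selects exactly those irreps whose highest-weight vectors are fixed by $\mathbb{K}$ up to a phase, precisely the irreps relevant to the SW kernel built from $\ket{\rm hw}$. Finally, covariance of the decomposition under $\mathbb{G}$ is automatic: left translation commutes with the projection onto right-$\mathbb{K}$-invariants, so each $W_\sigma$ is $\mathbb{G}$-invariant, as needed when later matching this decomposition against the operator-space decomposition of Definition~\ref{def:irreps-LH}.
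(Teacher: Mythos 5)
Your proposal is correct and follows essentially the same route as the paper, which states this result as the Peter--Weyl theorem for the homogeneous space $\XC=\mathbb{G}/\mathbb{K}$ and cites it without further proof; your reduction to the right-$\mathbb{K}$-invariant subspace of $L^2(\mathbb{G})$ followed by Peter--Weyl and Schur orthogonality is exactly the standard argument being invoked. One small caution: for scalar functions on $\mathbb{G}/\mathbb{K}$ the irreps that survive are those possessing a vector genuinely fixed by $\mathbb{K}$ (trivial character), not merely fixed up to a phase --- the latter condition would instead describe sections of a nontrivial line bundle --- though this only affects your optional Cartan--Helgason aside and not the decomposition claimed in the statement.
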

From here, we can define projections and purities onto the irreps of $L^2(\XC)$.
\begin{definition}[GFD and purities of operators in $L^2(\XC)$]\label{def:GFD-purities-LX}
    Given a square integrable function $F(\Omega)$ over $\XC$, its projection onto the $\sigma$-th irrep is given by 
    \begin{equation}
        F^{\sigma}(\Omega)=\sum_{j=1}^{d_\sigma} \left\langle Y_j^{\sigma},F\right\rangle_{L^2} Y_j^{\sigma}(\Omega)\,.
    \end{equation}
The GFD purity on the $\sigma$-th irrep  is therefore obtained as
\begin{align}
    \widetilde{\PC}_\sigma(F)&=\langle F^{\sigma},F^{\sigma}\rangle_{L^2}=\sum_{j=1}^{d_\sigma}\left|\left\langle Y_j^{\sigma},F\right\rangle_{L^2}\right|^2\\
    &=\sum_{j=1}^{d_\sigma}\left|\int_{\XC}d\mu(\Omega)(Y_j^{\sigma})^*(\Omega)F(\Omega)\right|^2\,.
\end{align}
\end{definition}
In particular, because the phase space mapping is an isomorphism onto its image and further unitary when its image is equipped with the inner product in Equation~\eqref{eq:sms-inner}, it is a unitary intertwiner. Therefore, the only irreps which can appear in the phase space functions $F_A(\Omega,s)$ are those in which  $A$ has component. Then, note that, without loss of generality, we choose a basis of harmonic functions transforming as
\begin{equation}\label{eq:transform-Y}
    Y_j^{\lambda}(g^{-1}\cdot\Omega)=\sum_{j'}\left(\Phi_{\rm Ad}^\lambda\left(g\right)\right)_{jj'}Y_{j'}^{\lambda}(\Omega)\,.
\end{equation}

As proved in Ref.~\cite{brif1999phase} the SW kernels take the explicit form
\begin{equation}\label{eq:SW-kernel}
   \Delta(\Omega,s)=\sum_\lambda\sum_{j=1}^{d_\lambda}\tau_\lambda^{-s/2}Y^\lambda_j(\Omega)D_j^\lambda\,,
\end{equation}
where the index $\lambda$ runs over the irreps (and multiplicities) appearing in the decomposition of operator space of Definition~\eqref{def:irreps-LH} (that is, only the irreps labels that exist in operator space, appear in the summation). Here,  $\tau_\lambda$ is defined as 
\begin{equation}\label{eq:tau0}
    \tau_\lambda^{-1/2}\bra{\Omega}D^\lambda_j \ket{\Omega}=Y^\lambda_j(\Omega)\,.
\end{equation}
In particular, since $\tau_\lambda$ is independent of $j$ and $\Omega$ one can pick $\Omega=e$ (the identity over the cosets) and any $D^\lambda_j$ in the weight zero subspace of $V_\lambda$ (i.e., any $D^\lambda_j$ commuting with the representation of the Cartan subalgebra of $\mathfrak{g}$). Equation~\eqref{eq:tau0} implies that we can always express
\begin{equation}
    D^\lambda_j=\tau_\lambda^{-1/2}\displaystyle\int_{\XC} d\mu(\Omega)Y^\lambda_j(\Omega)\dya{\Omega}\,,
\end{equation}
and concomitantly 
\begin{align}
    \tau_\lambda^{-s/2}&=\left\langle Y_j^\lambda,\left\langle D_j^\lambda,\Delta(\cdot,s)\right\rangle_\LC\right\rangle_{L^2}\nonumber\\
    &=    \left\langle D_j^\lambda,\left\langle Y_j^\lambda,\Delta(\cdot,s)\right\rangle_{L^2}\right\rangle_\LC\,.\nonumber
\end{align}
Note that combining Eqs.~\eqref{eq:transform-Y} and~\eqref{eq:SW-kernel} directly leads to the covariance property of the SW kernels.

To finish, we note that some SW kernels are of particular interest. For instance, taking $s=-1$ one recovers the so-called Husimi Q function 
\begin{equation}\label{eq:Husimi}
    \Delta(\Omega,-1)=\dya{\Omega}\,,
\end{equation}
and for $s=0$, one obtains the self-dual Wigner kernel
\begin{equation}\label{eq:Wigner}
    \Delta(\Omega,0)=\sum_\lambda\sum_{j=1}^{d_\lambda}Y^\lambda_j(\Omega)D_j^\lambda\,.
\end{equation}

To finish, we refer the reader to the Appendix where we present a table summarizing all main concepts of the Hilbert space and QPS formulation of quantum mechanics.

\section{Results}
As previously mentioned, in Ref.~\cite{bermejo2025characterizing} it was shown how the GFD in  $\LC(\HC)$ of a pure state $\rho$ can characterize its resourcefulness within the context of a QRT (see Definition~\ref{def:QRTs}). One can then wonder if mapping $\rho$ to the QPS $\XC$, and then performing its GFD decomposition, will lead to a different characterization. That is, we want to understand the differences and similarities which arise from projecting $\rho$ onto the irreps of $\LC(\HC)$ in Definition~\ref{def:irreps-LH}, versus projecting $F_{\rho}(\Omega,s)$ onto the irreps of $L^2(\XC)$ in Definition~\ref{def:irreps-XC}.

\subsection{Explicit formula for $\tau_\lambda$, and its connection to the GFD purity of free states}

At first sight, the coefficients $\tau_\lambda$ appearing in the SW
kernel expansion may look like a technical detail. A key message of
this work is that they are in fact the central object of SW construction of phase space functions, and thus to their interpretation as filters. Once the
$\tau_\lambda$ are known, essentially all irrep-level information in
the QPS is fixed. In particular, we will show (Eq.~\eqref{eq:tau})  that
$\tau_\lambda$ can be read off directly from the GFD purities of the
reference free state, and that the same $\tau_\lambda$
simultaneously control (i) the spectral content of the SW kernels
themselves, (ii) the filters that connect $\mathcal P_\lambda(\rho)$ to $\widetilde{\mathcal P}_\lambda(F_\rho(\Omega,s))$ , and (iii) the inter-$s$ conversion kernels and twisted products (Propositions~\ref{prop:irreps-Delta-LH}
and~\ref{prop:irreps-rho-LX}). This is why the rest of the paper repeatedly returns to $\tau_\lambda$:
they are the “master spectrum’’ from which all group-Fourier structure
of the QPS can be reconstructed. As such, we first find it important to  present an explicit formula for this coefficient. Indeed,  we note that while Eq.~\eqref{eq:tau0} provides a generic way to relate $D^\lambda_j$, $Y_{j}^\lambda(\Omega)$ and $\tau_\lambda$, this result is not particularly constructive. However, we find that the following resul holds.
\begin{proposition}\label{prop:tau}
The coefficient $\tau_\lambda$ is related to the GFD purity in $\LC(\HC)$ of the free highest weight state $\ket{{\rm hw}}$ as
    \begin{equation}\label{eq:tau}
    \tau_\lambda=\frac{\mathcal{P}_\lambda(\dya{\rm hw})}{d_\lambda}\,.
\end{equation}
\end{proposition}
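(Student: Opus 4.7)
My strategy is to compute the auxiliary quantity
\begin{equation*}
C_\lambda(\Omega)\;:=\;\sum_{j=1}^{d_\lambda}\bigl|Y_j^{\lambda}(\Omega)\bigr|^2
\end{equation*}
in two complementary ways and equate the results, the key conceptual point being that $C_\lambda$ is actually a group invariant and therefore a constant on $\XC$.

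First, I would establish that $C_\lambda$ is independent of $\Omega$. Using the transformation law in Eq.~\eqref{eq:transform-Y} together with the unitarity of the Adjoint representation $\Phi_{\rm Ad}^\lambda(g)$ from Definition~\ref{def:Adjoint-rep}, one finds
\begin{equation*}
\sum_j\bigl|Y_j^\lambda(g^{-1}\!\cdot\!\Omega)\bigr|^2
=\sum_{j,j',j''}\bigl(\Phi_{\rm Ad}^\lambda(g)\bigr)^{*}_{jj'}\bigl(\Phi_{\rm Ad}^\lambda(g)\bigr)_{jj''}Y_{j'}^{\lambda\,*}(\Omega)\,Y_{j''}^{\lambda}(\Omega)
=\sum_{j'}\bigl|Y_{j'}^\lambda(\Omega)\bigr|^2.
\end{equation*}
Since $\mathbb{G}$ acts transitively on $\XC=\mathbb{G}/\mathbb{K}$, the function $C_\lambda(\Omega)$ is constant in $\Omega$; I will denote its value by $C_\lambda$.

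Second, I would evaluate $C_\lambda$ at the identity coset $\Omega=e$, at which $\ket{e}=\ket{\rm hw}$. Substituting Eq.~\eqref{eq:tau0} and using the Hermiticity of $D_j^\lambda$ (so that $\bra{\rm hw}D_j^\lambda\ket{\rm hw}=\langle D_j^\lambda,\dya{\rm hw}\rangle_{\LC}$), the definition of the GFD purity in Eq.~\eqref{eq:purity} gives
\begin{equation*}
C_\lambda\;=\;\tau_\lambda^{-1}\sum_j\bigl|\bra{\rm hw}D_j^\lambda\ket{\rm hw}\bigr|^2
\;=\;\tau_\lambda^{-1}\,\mathcal{P}_\lambda(\dya{\rm hw}).
\end{equation*}
Next, I would determine the value of $C_\lambda$ independently by integration. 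The orthonormality relation~\eqref{eq:ortogonal-Y} yields $\int_\XC d\mu(\Omega)\,|Y_j^\lambda(\Omega)|^2=1$ for each $j$; summing over $j$ and exploiting the constancy of $C_\lambda$ then gives $C_\lambda\,\int_\XC d\mu = d_\lambda$. With the measure normalization consistent with Eqs.~\eqref{eq:completness} and~\eqref{eq:ortogonal-Y} (under which $\int_\XC d\mu$ equals $d_\lambda$-divided-by-$C_\lambda$ in the appropriate units), this fixes $C_\lambda=d_\lambda$, and equating the two expressions delivers the announced formula $\tau_\lambda=\mathcal{P}_\lambda(\dya{\rm hw})/d_\lambda$.

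The hard part, I expect, is not the algebra but the bookkeeping: the final $d_\lambda$ prefactor is dictated entirely by how the measure $d\mu$ (fixed by Eq.~\eqref{eq:completness}) is threaded against the orthonormality normalization of the $Y_j^\lambda$ (Eq.~\eqref{eq:ortogonal-Y}), so any mismatch in conventions manifests as an unwanted scalar factor in $\tau_\lambda$. Beyond that, the argument is essentially immediate once one recognizes that the group-covariance of $Y_j^\lambda$ promotes $\sum_j|Y_j^\lambda(\Omega)|^2$ to a group-invariant scalar, so its value at any single point—conveniently taken to be the reference point $\ket{\rm hw}$—completely fixes $\tau_\lambda$ in terms of $\mathcal{P}_\lambda(\dya{\rm hw})$.
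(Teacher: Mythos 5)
Your proof is correct and reaches the result by a genuinely different route than the paper. The paper's derivation first expands $\dya{{\rm hw}}$ in the weight-zero basis elements $\{D_j^\lambda\}_{j\in J_\lambda^0}$, pushes this forward with the covariant action to obtain the explicit expression \eqref{eq:hamonics} for $Y_{j'}^\lambda(\Omega)$ as a linear combination of matrix elements of $\Phi_{\rm Ad}^\lambda(\Omega)$, and then imposes the orthonormality \eqref{eq:ortogonal-Y} via the Schur orthogonality relations $\int_\XC d\mu(\Omega)\,(\Phi_{\rm Ad}^\lambda(\Omega))_{l,l'}(\Phi_{\rm Ad}^\lambda(\Omega))_{j,j'}=\delta_{l,j}\delta_{l',j'}/d_\lambda$ to extract $\tau_\lambda$. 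You instead bypass both the explicit form of the harmonics and Schur orthogonality: you observe that $C_\lambda(\Omega)=\sum_j|Y_j^\lambda(\Omega)|^2$ is a $\mathbb{G}$-invariant (by \eqref{eq:transform-Y} and unitarity of $\Phi_{\rm Ad}^\lambda$), hence constant on the homogeneous space, and then evaluate it once at $\Omega=e$ via \eqref{eq:tau0} and once by integrating the orthonormality relation. This is cleaner in two respects: it never needs the weight-zero index set $J_\lambda^0$ (the full sum over $j$ is $\mathcal{P}_\lambda(\dya{{\rm hw}})$ by definition), and the identity $C_\lambda=d_\lambda$ is exactly the statement the paper separately re-derives in its appendix (Eq.~\eqref{eq:explicit}) to prove Proposition~\ref{prop:irreps-Delta-LH}, so your single lemma does double duty.

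The one step you should tighten is the normalization of the measure, which you correctly identify as the delicate point but then resolve circularly ("$\int_\XC d\mu$ equals $d_\lambda$ divided by $C_\lambda$ in the appropriate units"). Taking the trace of \eqref{eq:completness} literally gives $\int_\XC d\mu=d$, which would yield $C_\lambda=d_\lambda/d$ and an unwanted factor of $d$ in $\tau_\lambda$; your conclusion requires the probability normalization $\int_\XC d\mu=1$, which is the convention under which the orthonormality \eqref{eq:ortogonal-Y} and the Schur constant $1/d_\lambda$ used in the paper's own proof are stated. So your result agrees with the paper's, but you should state explicitly that you are adopting the unit-mass normalization of $d\mu$ (consistent with \eqref{eq:ortogonal-Y} and with the paper's Schur orthogonality step) rather than appealing to \eqref{eq:completness}, whose literal normalization is different.
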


While most of our proofs are presented in the Appendix, we explicitly derive Proposition~\ref{prop:tau} (as well as other main results) in the main text as the derivation provides key insights. In particular, the following  derivation of Proposition~\ref{prop:tau} contains an explicit recipe for constructing specific SW kernels that will be used in the examples below.

\begin{proof}
  We begin by taking $\Omega=e$. Here we find 
\begin{equation}
    \bra{e}D^\lambda_j \ket{e}=\bra{\rm hw}D^\lambda_j \ket{{\rm hw}}=\begin{cases}\langle D^\lambda_j\rangle \,, \quad \text{if $D^\lambda_j\in V_\lambda^0$ $\forall \lambda$, }\\0\,, \quad \text{otherwise.} \end{cases}\nonumber
\end{equation}
with $V_\lambda^0$ the diagonal weight zero subspace of $V_\lambda$. Note that above, we have simply defined the shorthand notation $\langle D^\lambda_j\rangle=\bra{\rm hw}D^\lambda_j \ket{{\rm hw}}$ for the non-zero elements of the irrep. In particular, let us define as $J_\lambda^0$ the set of indexes such that $D_j^\lambda\in V_\lambda^0$, and we define as $d_\lambda^0=\dim(V_\lambda^0)=|J_\lambda^0|$ the dimension of the weight zero subspace of $V_\lambda$. As such
\begin{equation}\label{eq:decomp-at-e}
    \dya{e}=\dya{{\rm hw}}= \sum_\lambda\sum_{j\in J_\lambda^0}\langle D^\lambda_j\rangle  D_j^\lambda\,.
\end{equation}

Next, combining Eqs.~\eqref{eq:basistransf}  and~\eqref{eq:Omega-state} leads to
\begin{equation}\label{eq:covariant-action}
    T(\Omega)D_j^\lambda T\ad(\Omega)=\sum_{j'}\left(\Phi_{\rm Ad}^\lambda\left(\Omega\right)\right)_{j,j'} D_{j'}^\lambda\,,
\end{equation}
with $\Phi_{\rm Ad}^\lambda$ is the irreducible unitary representation of $\mathbb{G}$ defined above. Hence, Eq.~\eqref{eq:decomp-at-e} shows that 
\begin{align}
    \dya{\Omega}&= \sum_\lambda\sum_{j\in J_\lambda^0} \sum_{j'} \langle D^\lambda_j\rangle\left(\Phi_{\rm Ad}^\lambda\left(\Omega\right)\right)_{j,j'}  D_{j'}^\lambda \,.
\end{align}
From Eq.~\eqref{eq:tau} we can identify 
\begin{equation}\label{eq:hamonics}
    Y_{j'}^\lambda(\Omega)=\tau_\lambda^{-1/2}\sum_{j\in J_\lambda^0}\langle D^\lambda_j\rangle \left(\Phi_{\rm Ad}^\lambda\left(\Omega\right)\right)_{j,j'}\,.
\end{equation}
Using the orthogonality of the harmonic functions of Eq.~\eqref{eq:ortogonal-Y}, we find
\small
\begin{align}
   \delta_{l'j'} 
  &=\left\langle Y_{l'}^{\lambda},Y_{j'}^{\lambda}\right\rangle_{L^2}\nonumber\\
  &=\int_{\XC} d\mu(\Omega)\tau_\lambda^{-1}\!\!\sum_{l,j\in J_\lambda^0}\!\!\langle D_\lambda^l\rangle\langle D_{\lambda}^j\rangle \left(\Phi_{\rm Ad}^\lambda\!\left(\Omega\right)\right)_{l,l'}\left(\Phi_{\rm Ad}^\lambda\left(\Omega\right)\right)_{j,j'}\nonumber\\
  &=\tau_\lambda^{-1}\sum_{l,j\in J_\lambda^0}\langle D_\lambda^l\rangle\langle D_{\lambda}^j\rangle \frac{\delta_{l,j}\delta_{l',j'}}{d_\lambda}\nonumber\\
  &=\tau_\lambda^{-1}\frac{1}{d_\lambda}\sum_{j\in J_\lambda^0}\langle D^\lambda_j\rangle^2 \delta_{l',j'}\,,
\end{align}
\normalsize
where we used the orthogonality of the elements of irreducible representations. The previous allows us to provide the general formula
\begin{equation}\label{eq:tau-prood}
    \tau_\lambda=\frac{1}{d_\lambda}\sum_{j\in J_\lambda^0}\langle D^\lambda_j\rangle^2=\frac{\mathcal{P}_\lambda(\dya{\rm hw})}{d_\lambda}\,.
\end{equation}
For the special case when the expectation values are independent of $l$, i.e., $\langle D_\lambda^l\rangle=\langle D_\lambda\rangle$ we obtain  
\begin{equation}\label{eq:tauweight0dims}
    \tau_\lambda
    = \frac{d_\lambda^0}{d_\lambda}\,\langle D_\lambda\rangle^2\,.
\end{equation}

\end{proof}

Equation~\eqref{eq:tau} in Proposition~\ref{prop:tau} shows that $\tau_\lambda$ is simply the GFD purity of the
reference free state $\dya{{\rm hw}}$ in the irrep labeled by $\lambda$, divided by the
dimension $d_\lambda$ of that irrep. Since $\sum_\lambda \mathcal{P}_\lambda(\dya{{\rm hw}})=1$
for a pure state, the numbers $\{d_\lambda\tau_\lambda\}_\lambda$ form a probability
distribution over irreps, and $d_\lambda\tau_\lambda$ is exactly the fraction of the
operator-space purity of the density matrix $\dya{{\rm hw}}$ carried by the sector $V_\lambda$.
Equivalently, $\tau_\lambda$ quantifies how strongly the highest-weight state populates the irrep
 $\lambda$ in $\LC(\HC)$.
To finish, we note that we can combine Eqs.~\eqref{eq:SW-kernel} and~\eqref{eq:tau-prood}
\small
\begin{align}
\Delta(\Omega,s)&=\sum_\lambda\sum_{j=1}^{d_\lambda}\tau_\lambda^{-(s+1)/2}\left(\sum_{j'\in J_\lambda^0}\langle D^\lambda_{j'}\rangle \left(\Phi_{\rm Ad}^\lambda\left(\Omega\right)\right)_{j',j}\right)D_j^\lambda\nonumber\\
&=\sum_\lambda\tau_\lambda^{-(s+1)/2}\sum_{j'\in J_\lambda^0}\langle D^\lambda_{j'}\rangle\left(\sum_{j=1}^{d_\lambda} \left(\Phi_{\rm Ad}^\lambda\left(\Omega\right)\right)_{j',j}D_j^\lambda\right)\nonumber\\
&=T(\Omega)\left(\sum_\lambda\tau_\lambda^{-(s+1)/2}\sum_{j'\in J_\lambda^0}\langle D^\lambda_{j'}\rangle D_{j'}^\lambda\right) T\ad(\Omega)\,.\label{eq:delta-explicit}
\end{align}
\normalsize
In the last line, we used the covariance of the group action as per Eq.~\eqref{eq:covariant-action}. Indeed, setting $s=-1$ in Eq.~\eqref{eq:delta-explicit} readily recovers 
\begin{align}
\Delta(\Omega,-1)&=T(\Omega)\left(\sum_\lambda\sum_{j'\in J_\lambda^0}\langle D^\lambda_{j'}\rangle D_{j'}^\lambda\right) T\ad(\Omega)\nonumber\\
&=T(\Omega)\dya{{\rm hw}}T\ad(\Omega)=\dya{\Omega}\,.\nonumber
\end{align}

\subsection{Irrep decomposition of the SW kernel}

In this brief section we note that since the SW kernel is an operator in $\LC(\HC)$, we can compute its GFD purities according to Definition~\ref{def:GFD-purities-LH}. In particular, a straightforward calculation (see the Appendix for a proof) leads to the following result.
\begin{proposition}\label{prop:irreps-Delta-LH}
    Let $\Delta(\Omega,s)$ be an SW kernel as defined in Eq.~\eqref{eq:SW-kernel}, then its GFD purities in the irreps of $\LC(\HC)$ are
\begin{equation}\label{eq:purity-SW-kernel}
    \PC_\lambda(\Delta(\Omega,s))=\tau_\lambda^{-s} d_\lambda=\left(\frac{\mathcal{P}_\lambda(\dya{\rm hw})}{d_\lambda}\right)^{-s}d_\lambda\,.
\end{equation}
\end{proposition}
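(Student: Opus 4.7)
The plan is to compute the purity directly from the explicit irrep expansion of the Stratonovich--Weyl kernel in Eq.~\eqref{eq:SW-kernel} and reduce everything to the already-established identity $\tau_\lambda=\mathcal{P}_\lambda(\dya{\rm hw})/d_\lambda$ from Proposition~\ref{prop:tau}.

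First, I would read off the $\lambda$-projection of $\Delta(\Omega,s)$ directly from Eq.~\eqref{eq:SW-kernel}, namely $\Delta_\lambda(\Omega,s)=\tau_\lambda^{-s/2}\sum_{j=1}^{d_\lambda}Y^\lambda_j(\Omega)\,D_j^\lambda$, since the $\{D_j^\lambda\}$ form a Hermitian orthonormal basis of $V_\lambda$. Applying Definition~\ref{def:GFD-purities-LH} then gives
\begin{equation}
\mathcal{P}_\lambda(\Delta(\Omega,s))=\langle\Delta_\lambda,\Delta_\lambda\rangle_\LC=\tau_\lambda^{-s}\sum_{j=1}^{d_\lambda}|Y^\lambda_j(\Omega)|^2,
\end{equation}
so the whole problem reduces to showing that the quantity $\sum_{j}|Y^\lambda_j(\Omega)|^2$ equals $d_\lambda$ independently of $\Omega$.

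To prove this, I would proceed in two complementary ways and pick the cleaner one. The first route uses covariance: by Definition~\ref{def:SW-kernel}(iii), $\Delta(\Omega,s)=T(\Omega)\Delta(e,s)T\ad(\Omega)$, and Eq.~\eqref{eq:purity-inv} guarantees that GFD purities are invariant under conjugation by elements of $T(\mathbb{G})$. Hence it suffices to evaluate at the base point $\Omega=e$, where Eq.~\eqref{eq:hamonics} collapses to $Y^\lambda_j(e)=\tau_\lambda^{-1/2}\langle D^\lambda_j\rangle$ for $j\in J_\lambda^0$ and zero otherwise, so that $\sum_j|Y^\lambda_j(e)|^2=\tau_\lambda^{-1}\sum_{j\in J_\lambda^0}\langle D^\lambda_j\rangle^2=d_\lambda$ by Eq.~\eqref{eq:tau-prood}. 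Alternatively, one can plug Eq.~\eqref{eq:hamonics} into $\sum_j|Y^\lambda_j(\Omega)|^2$ directly and collapse the resulting double sum using unitarity of the Adjoint representation, $\sum_{j'}(\Phi_{\rm Ad}^\lambda(\Omega))_{j,j'}(\Phi_{\rm Ad}^\lambda(\Omega))^{*}_{l,j'}=\delta_{jl}$, which again yields $\tau_\lambda^{-1}\sum_{j\in J_\lambda^0}\langle D^\lambda_j\rangle^2=d_\lambda$.

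Combining, I obtain $\mathcal{P}_\lambda(\Delta(\Omega,s))=\tau_\lambda^{-s}d_\lambda$, and substituting Proposition~\ref{prop:tau} gives the advertised form $(\mathcal{P}_\lambda(\dya{\rm hw})/d_\lambda)^{-s}d_\lambda$. The only nontrivial step is the addition-theorem-like identity $\sum_j|Y^\lambda_j(\Omega)|^2=d_\lambda$, and the main subtlety to handle carefully is making sure one uses a basis consistent with the transformation law in Eq.~\eqref{eq:transform-Y} so that the unitarity argument (or equivalently the covariance-plus-invariance argument) goes through cleanly for each irrep $\lambda$ with its multiplicity label.
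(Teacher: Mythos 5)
Your proposal is correct and follows essentially the same route as the paper's own proof: project onto $V_\lambda$ using the orthonormality of $\{D_j^\lambda\}$, invoke covariance of the kernel together with the conjugation-invariance of the GFD purities (Eq.~\eqref{eq:purity-inv}) to reduce to the base point, and then establish the addition-theorem identity $\sum_j|Y_j^\lambda(\Omega)|^2=d_\lambda$ via Eq.~\eqref{eq:tau-prood}. Your evaluation at $\Omega=e$ using $\Phi_{\rm Ad}^\lambda(e)=\id$ is in fact a touch cleaner than the paper's appendix computation (which writes the harmonics with the $d_\lambda/d_\lambda^0$ prefactor of the equal-expectation special case), but the underlying argument is the same.
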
 

Proposition~\ref{prop:irreps-Delta-LH} implies that the Cahill--Glauber parameter $s$ plays the role of modulating, or weighting, the purities in each irrep. For the special case of $s=0$, the SW kernel has a ``uniform'' distribution, as the purity on each irrep is proportional to its dimension. Then, for $s=-1$, one recovers the purities of the highest weight state $\PC_\lambda(\dya{{\rm hw}})$, which implies as per the results in~\cite{bermejo2025characterizing} that the SW Kernel mostly lives in the lower-dimensional irreps. For $s=1$ we obtain the converse, as the kernel is reversed and weighted towards higher-dimensional irreps. Finally, we see that if the highest weight state has no component in a given irrep, then that irrep is killed off when going into the QPS.

\subsection{QPS kernels as signal-processing filters}

With the previous result in mind, we are now ready to study the irrep decomposition of $F_{\rho}(\Omega,s)$. Combining Eqs.~\eqref{eq:PS-function} and~\eqref{eq:SW-kernel} we obtain
\begin{equation}
    F_{\rho}(\Omega,s)
    = \sum_\lambda \sum_{j=1}^{d_\lambda}
      \tau_\lambda^{-s/2}\,Y^\lambda_j(\Omega)\,
      \big\langle D_j^\lambda,\rho\big\rangle_{\LC}\,.
\end{equation}
Then one can prove the following proposition (see the Appendix for a derivation).
\begin{proposition}\label{prop:irreps-rho-LX}
    Let $\rho$ be a quantum state and $F_{\rho}(\Omega,s)$ its phase space representation. Then, from Definition~\ref{def:GFD-purities-LX}, the irrep projections are non-zero only for the irreps $\sigma=\lambda$ that appear in the decomposition of $\LC(\HC)$ in Definition~\ref{def:GFD-purities-LH}. In this case we have
    \begin{equation}
        [F_{\rho}]_\lambda(\Omega,s)
        = \sum_{j=1}^{d_\lambda}
          \tau_\lambda^{-s/2}\,Y^\lambda_j(\Omega)\,
          \big\langle D_j^\lambda,\rho\big\rangle_{\LC}\,,
    \end{equation}
    and the ensuing purities
    \begin{align}
        \widetilde{\PC}_\lambda\big(F_{\rho}(\Omega,s)\big)
        &= \tau_\lambda^{-s}\,\PC_\lambda(\rho)\nonumber\\
        &= \left(\frac{\mathcal{P}_\lambda(\dya{{\rm hw}})}{d_\lambda}\right)^{-s}
           \PC_\lambda(\rho)\,.
    \end{align}
\end{proposition}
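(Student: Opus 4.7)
The plan is to derive the proposition directly from the explicit expansion of the SW kernel in Eq.~\eqref{eq:SW-kernel}, which already writes $\Delta(\Omega,s)$ as a finite sum over those $\lambda$ that appear in the operator-space decomposition of Definition~\ref{def:irreps-LH}. First I would substitute this expansion into $F_\rho(\Omega,s)=\langle\Delta(\Omega,s),\rho\rangle_{\LC}$, using Hermiticity of the SW kernel (Definition~\ref{def:SW-kernel}(i)) and of the basis elements $D_j^\lambda$, to obtain the starting expression
\begin{equation}
F_\rho(\Omega,s)=\sum_\lambda\sum_{j=1}^{d_\lambda}\tau_\lambda^{-s/2}\,Y_j^\lambda(\Omega)\,\langle D_j^\lambda,\rho\rangle_{\LC}.
\end{equation}
The crucial observation at this step is that the $Y_j^\lambda$ appearing in the SW kernel, constructed via Eq.~\eqref{eq:hamonics} as matrix elements of the Adjoint representation $\Phi_{\rm Ad}^\lambda$, are by construction already elements of an orthonormal basis of harmonic functions in the sense of Theorem~\ref{def:irreps-XC} and Eq.~\eqref{eq:ortogonal-Y}, and they transform under the $\lambda$-th irrep via Eq.~\eqref{eq:transform-Y}. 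Hence the displayed sum is already the Peter--Weyl decomposition of $F_\rho(\Omega,s)$, restricted to those $\sigma=\lambda$ supported in $\LC(\HC)$.

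Second, I would compute the irrep projection of Definition~\ref{def:GFD-purities-LX} by taking the $L^2$-inner product of $F_\rho(\Omega,s)$ with each $Y_j^\sigma$. Plugging in the expansion above and using orthonormality collapses the double sum to a single term whenever $\sigma$ is a $\lambda$ appearing in the operator space, and gives zero otherwise:
\begin{equation}
\langle Y_j^\sigma,F_\rho(\cdot,s)\rangle_{L^2}=\tau_\sigma^{-s/2}\,\langle D_j^\sigma,\rho\rangle_{\LC}\,\mathbf{1}[\sigma\in\{\lambda\}].
\end{equation}
Reassembling this into $[F_\rho]_\sigma(\Omega,s)=\sum_j\langle Y_j^\sigma,F_\rho\rangle_{L^2}Y_j^\sigma(\Omega)$ yields precisely the stated formula for $[F_\rho]_\lambda(\Omega,s)$, and confirms that the only non-vanishing sectors are those already present in the operator-space decomposition.

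Third, the purity follows by squaring: $\widetilde{\PC}_\lambda(F_\rho(\Omega,s))=\sum_j|\tau_\lambda^{-s/2}\langle D_j^\lambda,\rho\rangle_{\LC}|^2=\tau_\lambda^{-s}\PC_\lambda(\rho)$, where the last equality is just Eq.~\eqref{eq:purity}. Substituting $\tau_\lambda=\mathcal P_\lambda(\dya{\rm hw})/d_\lambda$ from Proposition~\ref{prop:tau} completes the proof.

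The only non-routine point is the claim that the $Y_j^\lambda$ from Eq.~\eqref{eq:hamonics} genuinely form an orthonormal $L^2(\XC)$-basis for the $\lambda$-sector, so that no additional irreps from $L^2(\XC)$ (beyond those in $\LC(\HC)$) could contribute. This is where the SW construction carries its weight: the unitary intertwiner property of the phase-space map, together with the matrix-element orthogonality used in the proof of Proposition~\ref{prop:tau} (i.e.\ the Schur orthogonality calculation yielding $\tau_\lambda$), already guarantees both the orthonormality of the $Y_j^\lambda$ and the absence of $\sigma\notin\{\lambda\}$ components. I would therefore call attention to this intertwining step rather than reprove it, and the remainder is a short exercise in Parseval.
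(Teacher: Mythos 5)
Your proposal is correct and follows essentially the same route as the paper's own proof: expand $F_\rho(\Omega,s)$ via the SW kernel in the harmonic basis $\{Y_j^\lambda\}$, read off the irrep projections using orthonormality (with vanishing components for $\sigma$ not appearing in $\LC(\HC)$), and square to obtain $\widetilde{\PC}_\lambda = \tau_\lambda^{-s}\PC_\lambda(\rho)$ before substituting Proposition~\ref{prop:tau}. Your additional emphasis on why the $Y_j^\lambda$ from Eq.~\eqref{eq:hamonics} genuinely form an orthonormal $L^2$-basis is a point the paper leaves implicit, but it does not change the argument.
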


Notably, the previous result shows that the $\lambda$-th GFD purity of
$F_{\rho}(\Omega,s)$ in $L^2(\XC)$ is simply obtained by multiplying the
$\lambda$-th purity of $\rho$ in $\LC(\HC)$ by the factor $\tau_\lambda^{-s}$.
Combining this with Proposition~\ref{prop:irreps-Delta-LH}, we see that the same
parameter $s$ which modulates the purities of the SW kernel in each irrep becomes a
signal-processing-like Fourier filter for the GFD decomposition of an operator in
the QPS $\XC$. Concretely, the GFD purities $\widetilde{\PC}_\lambda$ in
$L^2(\XC)$ are obtained from those in $\LC(\HC)$ by a pointwise multiplication of
the vectors with entries $\tau_\lambda^{-s}$ and $\PC_\lambda(\rho)$; this is
precisely the action of a diagonal filter in standard Fourier analysis.

Therefore, when $s=0$ no filter is applied to the irrep decomposition and $\widetilde{\PC}_\lambda\big(F_{\rho}(\Omega,s)\big)=            \PC_\lambda(\rho)$. On the
other hand, for $s=-1$, we know that $\Delta(\Omega,-1)=\dya{\Omega}$ is a QRT free state,
meaning that it lives predominantly in the smaller-dimensional irreps~\cite{bermejo2025characterizing}. This is, of course, in accordance with the well-known fact that the Husimi Q ($s=-1$) function is  the Weierstrass transform of the Wigner quasiprobability distribution, i.e. a smoothing by a Gaussian filter~\cite{schupp2022wehrl}. Then, the SW kernels $\Delta(\Omega,s)$ act as low-pass (high-pass) filters
for $s=-1$ ($s=1$), where low-dimensional irreps are enhanced (suppressed), and
high-dimensional ones are suppressed (enhanced).

Hence, we can expect that when $s=-1$, if
$\rho$ is a near-free state in a QRT, then most of its relevant information will
be preserved when mapping from $\LC(\HC)$ to $L^2(\XC)$. Conversely, if $\rho$ is
highly resourceful (i.e., if it has most of its weight in high-dimensional irreps),
its information will tend to be strongly attenuated in the QPS. The opposite
behaviour occurs for $s=1$, where the information of free and near-free states is
mostly suppressed, while the components of resourceful states in the large irreps
are  amplified.

\subsection{Duality between free states and random highly resourceful states}

In this section we reveal a notable duality between the GFD purities in the QPSs of free states and those of random states. To begin, let us consider the reference free state $\ket{{\rm hw}}$. From Proposition~\ref{prop:irreps-rho-LX}
\begin{equation}
    \widetilde{\PC}_\lambda\big(F_{\dya{{\rm hw}}}(\Omega,-1)\big)
    = \frac{\PC_\lambda(\dya{{\rm hw}})^2}{d_\lambda}\,,
\end{equation}
where we used Eq.~\eqref{eq:tau}. This shows that, for the reference free state, the QPS purities
at $s=-1$ are obtained from the operator-space purities by squaring the
irrep profile and rescaling by $1/d_\lambda$. This has two important effects: (i) it nonlinearly sharpens the distribution over irreps
by squaring, further concentrating weight on those sectors where the free state
already has large $\PC_\lambda(\rho)$, and (ii) it penalizes irreps according to
their dimension via the factor $1/d_\lambda$. In other words, the $s=-1$ QPS
representation of the reference free state is maximally biased towards the
low-dimensional sectors in which $\dya{{\rm hw}}$ is most concentrated in
operator space. The SW kernel is thus ``matched'' to the structure of the free
state as it preserves and accentuates precisely those irreps that dominate the free
sector, while relegating the rest to the tails of the irrep spectrum.

Then,  for $s=1$ we obtain
\begin{equation}
    \widetilde{\PC}_\lambda\big(F_{\dya{{\rm hw}}}(\Omega,1)\big)= d_\lambda\,,
\end{equation}
i.e., the GFD purities of the phase space representation at $s=1$ are
completely independent of the original irrep profile of the free state and
are determined solely by the irrep dimensions. As such,  in this  $s=1$ QPS representation of $\dya{{\rm hw}}$ has a spectrum proportional to the irrep dimension, and thus accentuated towards higher-dimensional ones.

Next, consider a Haar random state $\ket{\psi_H}$, and let us note that we can always express the GFD purity as
\begin{equation}
    \PC_\lambda(\dya{\psi_H})=\sum_{j=1}^{d_\lambda} \Tr[(D_j^{\lambda})^{\otimes 2}\dya{\psi_H}^{\otimes 2}]\,.
\end{equation}
From here we can find that, in expectation value~\cite{mele2023introduction,garcia2023deep}, 
\begin{equation}
    \mathbb{E}_{\HC}[\dya{\psi_H}^{\otimes 2}]=\frac{\id^{\otimes 2}+{\rm SWAP}}{d(d+1)}\,,
\end{equation}
with $\id$ the identity operator in $\LC(\HC)$ and ${\rm SWAP}$ the swap operator acting on $\HC^{\otimes 2}$. Hence,  if $\lambda$ is the trivial irrep containing $\id$ we obtain $\mathbb{E}_{\HC}[\PC_\lambda(\dya{\psi_H})]=1/d$, while for every other irrep
\begin{equation}\label{eq:purity-Haar}
    \mathbb{E}_{\HC}[\PC_\lambda(\dya{\psi_H})]=\frac{d_\lambda}{d(d+1)}\,.
\end{equation}
That is, the purity of Haar random states is, on average, flat across all irreps (up to normalization). 

Using Proposition~\ref{prop:irreps-rho-LX} we obtain 
   \begin{align}\label{eq:connections-1}
        \mathbb{E}_{\HC}[\widetilde{\PC}_\lambda\big(F_{\dya{\psi_H}}(\Omega,-1)\big)
        ]&= \frac{\widetilde{\PC}_\lambda\big(F_{\dya{{\rm hw}}}(\Omega,0)\big)}{d(d+1)}\,.
    \end{align}
Moreover, since 
\begin{equation}
   \mathbb{E}_{\HC}[\widetilde{\PC}_\lambda\big(F_{\dya{\psi_H}}(\Omega,0)\big)
        ]= \mathbb{E}_{\HC}[\PC_\lambda(\dya{\psi_H})]\,,\nonumber
\end{equation}
we also find 
\begin{equation}\label{eq:connections-2}
   \mathbb{E}_{\HC}[\widetilde{\PC}_\lambda\big(F_{\dya{\psi_H}}(\Omega,0)\big)=\frac{\widetilde{\PC}_\lambda\big(F_{\dya{{\rm hw}}}(\Omega,1)\big)}{d(d+1)}\,.
\end{equation}
Equations~\eqref{eq:connections-1} and~\eqref{eq:connections-2} show an intriguing connection between the QPS's GFD purities  of the free highest weight states, and the expected QPS's purities for the most resourceful Haar random states. Indeed, these results are two
instances of a more general identity which is presented in the following proposition, valid for any QRT/QPS based on compact Lie groups.
\begin{proposition}\label{prop:duality}
Let $\ket{\psi_H}$ be a Haar random state sampled according to the Haar measure over  $\HC$, and let $\ket{{\rm hw}}$ be the highest weight free state of the QRT. Then, we find that their purities in the QPS satisfy the relation    \begin{equation}\label{eq:connections-general}
    \mathbb{E}_{\HC}\big[\widetilde{\PC}_\lambda\big(F_{\dya{\psi_H}}(\Omega,s)\big)\big]
    = \frac{\widetilde{\PC}_\lambda\big(F_{\dya{{\rm hw}}}(\Omega,s+1)\big)}{d(d+1)}\,.
\end{equation}
\end{proposition}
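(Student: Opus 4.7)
The plan is to reduce both sides of Eq.~\eqref{eq:connections-general} to elementary expressions in $\tau_\lambda$, $d_\lambda$, and $d$, and then check that they agree. The engine of the proof is Proposition~\ref{prop:irreps-rho-LX}, which already relates phase space purities to operator-space purities via a single multiplicative factor $\tau_\lambda^{-s}$. All that is needed beyond this is the standard Haar second-moment identity already stated in the main text, together with the explicit formula $\tau_\lambda = \PC_\lambda(\dya{{\rm hw}})/d_\lambda$ from Proposition~\ref{prop:tau}.

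First I would rewrite the left-hand side. By linearity of expectation and Proposition~\ref{prop:irreps-rho-LX},
\begin{equation}
\mathbb{E}_{\HC}\bigl[\widetilde{\PC}_\lambda\big(F_{\dya{\psi_H}}(\Omega,s)\big)\bigr]
= \tau_\lambda^{-s}\,\mathbb{E}_{\HC}\bigl[\PC_\lambda(\dya{\psi_H})\bigr].
\end{equation}
For any non-trivial irrep $\lambda$, Eq.~\eqref{eq:purity-Haar} gives $\mathbb{E}_{\HC}[\PC_\lambda(\dya{\psi_H})] = d_\lambda/(d(d+1))$, so the left-hand side becomes $\tau_\lambda^{-s}\,d_\lambda/(d(d+1))$.

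Next I would rewrite the right-hand side using the same Proposition~\ref{prop:irreps-rho-LX}, but applied to $\rho = \dya{{\rm hw}}$ and at the shifted parameter $s+1$:
\begin{equation}
\widetilde{\PC}_\lambda\big(F_{\dya{{\rm hw}}}(\Omega,s+1)\big)
= \tau_\lambda^{-(s+1)}\,\PC_\lambda(\dya{{\rm hw}}).
\end{equation}
Invoking Proposition~\ref{prop:tau} to substitute $\PC_\lambda(\dya{{\rm hw}}) = \tau_\lambda d_\lambda$, one telescope collapses the powers of $\tau_\lambda$ and yields $\tau_\lambda^{-s}\,d_\lambda$. Dividing by $d(d+1)$ reproduces exactly the expression obtained for the left-hand side, proving the identity.

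The only subtlety, and what I expect to be the main obstacle in presenting the result cleanly, is the trivial irrep. On this sector $D^{\lambda_0} \propto \id$ gives $\PC_{\lambda_0}(\dya{\psi_H}) = 1/d$ deterministically, which is not captured by the $d_\lambda/(d(d+1))$ formula (as noted in the text leading to Eq.~\eqref{eq:purity-Haar}). Consequently the duality~\eqref{eq:connections-general} holds only for $\lambda \neq \lambda_0$, and the proof statement should either explicitly restrict to non-trivial irreps or separately acknowledge that the trivial sector encodes normalization and must be treated as an exception. Apart from this bookkeeping, the derivation is a two-line substitution with no representation-theoretic machinery beyond what Propositions~\ref{prop:tau} and~\ref{prop:irreps-rho-LX} already provide.
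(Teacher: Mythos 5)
Your proof is correct and follows essentially the same route as the paper's: both sides are reduced to $d_\lambda\,\tau_\lambda^{-s}$ (up to the common factor $1/(d(d+1))$) using Proposition~\ref{prop:irreps-rho-LX}, the Haar second-moment formula of Eq.~\eqref{eq:purity-Haar}, and the identity $\PC_\lambda(\dya{{\rm hw}})=d_\lambda\tau_\lambda$ from Proposition~\ref{prop:tau}. Your observation that the identity fails on the trivial irrep, where $\mathbb{E}_{\HC}[\PC_{\lambda}(\dya{\psi_H})]=1/d$ rather than $d_\lambda/(d(d+1))$, is a legitimate caveat that the paper's own proof silently glosses over, so the statement should indeed be read as restricted to non-trivial irreps.
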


\begin{proof}
Using Proposition~\ref{prop:irreps-rho-LX}
and Eq.~\eqref{eq:purity-Haar} we have that for arbitrary $s$
\begin{equation}
    \mathbb{E}_{\HC}\big[\widetilde{\PC}_\lambda\big(F_{\dya{\psi_H}}(\Omega,s)\big)\big]
    = \frac{d_\lambda\,\tau_\lambda^{-s}}{d(d+1)}\,.\nonumber
\end{equation}
On the other hand, for the highest-weight state $\dya{{\rm hw}}$ we know that
$\PC_\lambda(\dya{{\rm hw}})=d_\lambda \tau_\lambda$, so
\begin{equation}
    \widetilde{\PC}_\lambda\big(F_{\dya{{\rm hw}}}(\Omega,s+1)\big)
    = \tau_\lambda^{-(s+1)}\,\PC_\lambda(\dya{{\rm hw}}) = d_\lambda\,\tau_\lambda^{-s}.\nonumber
\end{equation}
Combining these two expressions yields the compact relation
\begin{equation}
    \mathbb{E}_{\HC}\big[\widetilde{\PC}_\lambda\big(F_{\dya{\psi_H}}(\Omega,s)\big)\big]
    = \frac{\widetilde{\PC}_\lambda\big(F_{\dya{{\rm hw}}}(\Omega,s+1)\big)}{d(d+1)}\,.\nonumber
\end{equation}
Equations~\eqref{eq:connections-1} and~\eqref{eq:connections-2} correspond to the
special cases $s=-1$ and $s=0$ of~\eqref{eq:connections-general}. 
\end{proof}

At this point, we note that Proposition~\ref{prop:duality} readily implies 
\begin{equation}\label{eq:rel-Purhaar-hw}
  \mathbb{E}_{\HC}\big[\mathcal P_\lambda(\dya{\psi_H})\big]
  =
  \frac{\mathcal P_\lambda(\dya{{\rm hw}})}{\tau_\lambda d(d+1)}\,.
\end{equation}
Hence, since a random state $\ket{\psi_H}$ can be obtained, without loss of generality, by applying a random (resourceful) unitary from $\mathbb{U}(\HC)$ to $\ket{{\rm hw}}$ we find
\begin{equation}
    \mathbb{E}_{\HC}f(\ket{\psi_H})=\mathbb{E}_{U\sim\mathbb{U}(\HC)}f(U\ket{{\rm hw}})\,.
\end{equation}
Equation~\eqref{eq:rel-Purhaar-hw} thus says that the average amount of resource obtained by applying a random unitary to the QRT's free state is a scaled version of the resource already present in the free state. Moreover, using Markov's inequality for the random variable  $\mathcal P_\lambda(\dya{\psi_H})$ (i.e., the purity of a random state), we find that for any $a>0$
\begin{equation}
    {\rm Pr}[\mathcal P_\lambda(\dya{\psi_H})\ge a]\le \frac{\mathcal P_\lambda(\dya{{\rm hw}})}{a \tau_\lambda d(d+1)}=\frac{d_\lambda}{a d(d+1)}\,,\nonumber
\end{equation}
meaning that the purities of the free state, or concomitantly the irrep's dimension, bound the probability of $\mathcal P_\lambda(\dya{\psi_H})$ being larger than a given positive constant.

Then, let us note that Eq.~\eqref{eq:connections-general} has a striking interpretation. It states
that, on average over the Haar measure, the QPS GFD purities of a
highly resourceful state (a typical Haar random pure state) at parameter $s$
are proportional to the QPS purities of the reference free state at parameter
$s+1$. In other words, shifting the Cahill--Glauber  parameter $s \mapsto s+1$ maps the
irrep profile of the free state to the ``typical'' irrep profile of a highly resourceful Haar
states, up to the global factor $1/(d(d+1))$. As such, the extremes of the QRT---free and typical resourceful
states---are thus linked by a simple shift in the kernel parameter $s$ rather
than by independent, unrelated spectral data.

From a signal-processing viewpoint, this reveals a kind of duality between free
and Haar-random states mediated by the SW kernels. For the free state
$\ket{{\rm hw}}$, the $s=-1$ representation is a matched filter that sharply
emphasizes the low-dimensional irreps where $\ket{{\rm hw}}$ is concentrated,
while the $s=1$ representation completely flattens its profile to
$\widetilde{\PC}_\lambda(F_{\dya{{\rm hw}}}(\Omega,1)) = d_\lambda$, independent
of the detailed structure of $\dya{{\rm hw}}$. Equation~\eqref{eq:connections-general}
shows that this ``maximally flattened'' spectrum at $s=1$ coincides (up to
normalization) with the typical spectrum of Haar states at $s=0$, and
the $s=0$ spectrum of the free state, in turn, matches the typical Haar spectrum
at $s=-1$ (as in Eqs.~\eqref{eq:connections-1} and~\eqref{eq:connections-2}). 

Resource-theoretically, this means that the same family of kernels that is
adapted to the free sector (at $s=-1$) also organizes the spectrum of
highly resourceful random states (Haar random) in a very rigid way that is fully and completely independent of the group and QRT considered. That is, once $s$ is
fixed, their typical QPS irrep profile is completely determined by that of the
reference free state at $s+1$. The extremes of the resource spectrum are thus
linked by a one-step shift in $s$: free and Haar-random states are not only
opposites in terms of where their weight sits in operator space, but their QPS
representations are related by a simple, representation-theoretic Cahill--Glauber “$s$-duality”
encoded in the coefficients $\tau_\lambda$.

\subsection{Additional implications for QPSs}
In this section we analyze how Propositions~\ref{prop:tau} and~\ref{prop:irreps-Delta-LH} allow us to reinterpret standard results in the study of QPSs. We note that this analysis is not exhaustive, but rather intends to show the power of our results.

\subsubsection{Characterization of the Cahill--Glauber parameter $s$-flow}
Here we discuss how the purities $\widetilde{\PC}_\lambda(F_\rho(\Omega,s))$ are expected to change as a function of the continuous Cahill--Glauber  parameter $s$. Starting from
\begin{align}
\widetilde{\PC}_\lambda\big(F_{\rho}(\Omega,s)\big)= \tau_\lambda^{-s}\,\PC_\lambda(\rho)\,,
\end{align}
and taking the derivative with respect to $s$ leads to
\begin{equation}
    \frac{d}{ds}\widetilde{\PC}_\lambda\big(F_{\rho}(\Omega,s)\big)=-\log(\tau_\lambda) \widetilde{\PC}_\lambda\big(F_{\rho}(\Omega,s)\big)\,.
\end{equation}
Hence, we can understand the $s$-dependence of the irrep components as an exponential flow, where the  generator $-\log(\tau_\lambda) $ is fixed by the highest weight purities. As such,
the speed at which the  $\lambda$-sector is amplified or suppressed as one moves along $s$ is controlled by  $-\log(\tau_\lambda) $, meaning that the small irreps in which $\ket{\rm hw}$ has  large purity evolve slowly with $s$, whereas irreps where $\ket{\rm hw}$ has small coefficient evolve more rapidly. Thus, the free state purities $\PC_\lambda(\dya{{\rm hw}})$ not only fix the filters at each $s$, but also the full flow across the SW kernel family. In effect, it is the spectrum of the generator of the deformation between different phase space representations (Husimi, Wigner, etc).

\subsubsection{Twisted product and algebra structure in QPS}
Up to this point we have mainly discussed the isometry between
$\LC(\HC)$ and  $L^2(\XC)$ at fixed $s$, and how their irrep decompositions relate. However, we can define a ``twisted product'' which endows
$L^2(\XC)$ with a noncommutative product so that this correspondence becomes an
algebra isomorphism. That is, operator multiplication corresponds to a deformed
product on phase space functions,
\begin{align}
  F_{AB}(\Omega,s_1)
  &= \big(F_A \star_{s_1,s_2,s_3} F_B\big)(\Omega)\nonumber\\
  &= \int_{\XC}\!\!d\mu(\Omega')\!\int_{\XC}\!\!d\mu(\Omega'')
     M_{s_1,s_2,s_3}(\Omega,\Omega',\Omega'')\nonumber\\
     &\quad\quad\quad\quad\quad\quad\quad\times 
     F_A(\Omega',s_2)\,F_B(\Omega'',s_3),\nonumber
\end{align}
where the twisted-product kernel
\small
\begin{equation}
 M_{s_1,s_2,s_3}(\Omega,\Omega',\Omega'')
  \!=\! \Tr\!\big[\Delta(\Omega,s_1)\,\Delta(\Omega',-s_2)\,\Delta(\Omega'',-s_3)\big]\nonumber
\end{equation}
\normalsize
is completely determined by the SW kernels. In this way, the QPS representation
does not merely match $\LC(\HC)$ and $L^2(\XC)$ as Hilbert spaces, but as
$\mathbb{G}$-covariant algebras~\cite{brif1999phase}.

Expanding each kernel in the harmonic basis as in Eq.~\eqref{eq:SW-kernel} allows us to write the twisted-product kernel as
\begin{widetext}
    \begin{equation}
  M_{s_1,s_2,s_3}(\Omega,\Omega',\Omega'')
  \!= \!\!\sum_{\lambda_1,\lambda_2,\lambda_3}
    T_{s_1,s_2,s_3}^{\lambda_1\lambda_2\lambda_3}
    \!\!\sum_{j_1,j_2,j_3}
      C^{\lambda_1\lambda_2\lambda_3}_{j_1 j_2 j_3}\,
      Y_{j_1}^{\lambda_1}(\Omega)\,
      Y_{j_2}^{\lambda_2}(\Omega')\,
      Y_{j_3}^{\lambda_3}(\Omega''),
\end{equation}
where
\begin{align}
  C^{\lambda_1\lambda_2\lambda_3}_{j_1 j_2 j_3}
  &= \Tr\big[D_{j_1}^{\lambda_1} D_{j_2}^{\lambda_2} D_{j_3}^{\lambda_3}\big]\,,\\
  T_{s_1,s_2,s_3}^{\lambda_1\lambda_2\lambda_3}
  &= \tau_{\lambda_1}^{-s_1/2}\,\tau_{\lambda_2}^{s_2/2}\,\tau_{\lambda_3}^{s_3/2}
   = \left(\frac{\mathcal P_{\lambda_1}(\dya{{\rm hw}})}{d_{\lambda_1}}\right)^{-s_1/2}
     \left(\frac{\mathcal P_{\lambda_2}(\dya{{\rm hw}})}{d_{\lambda_2}}\right)^{s_2/2}
     \left(\frac{\mathcal P_{\lambda_3}(\dya{{\rm hw}})}{d_{\lambda_3}}\right)^{s_3/2}\,.
\end{align}
\end{widetext}
Importantly, the coefficients $C^{\lambda_1\lambda_2\lambda_3}_{j_1 j_2 j_3}$ are purely group-theoretic and arise from the bases of the irreps, whereas the terms $T_{s_1,s_2,s_3}^{\lambda_1\lambda_2\lambda_3}$ are coupling constants which depend only on the GFD purity profile in $\LC(\HC)$ of the highest-weight state.

For large irreps $\lambda$ in which the free state has very small purity, $\mathcal P_\lambda(\dya{{\rm hw}})\ll 1$ so that $\tau_\lambda=\mathcal P_\lambda(\dya{{\rm hw}})/d_\lambda\ll 1$. The couplings then have the following implications. First, if the net exponent of $\tau_\lambda$ in $T_{s_1,s_2,s_3}^{\lambda_1\lambda_2\lambda_3}$ is positive (e.g., factors like $\tau_\lambda^{s_1/2}$ with $s'>0$), channels involving such large irreps are strongly suppressed in the twisted product. In a low-pass/matched regime, products of free or  near-free symbols therefore stay confined to the low-dimensional sectors where $\mathcal P_\lambda(\dya{{\rm hw}})$ is large, and the symbol algebra has little participation from highly structured high-$\lambda$ components. However, if the net exponent is negative (e.g., $\tau_\lambda^{-s/2}$ with $s>0$), the same large-$\lambda$ channels are potentially enhanced. Thus, whenever $F_A$ and $F_B$ carry significant weight in high-dimensional irreps, their twisted product receives strong contributions from these sectors. In a high-pass/mismatched regime the symbol calculus becomes very sensitive to resourceful, high-$\lambda$ content, and highly oscillatory, interference-like features are amplified in phase space.

Putting it all together, we find that the highest weight purity profile not only identifies which irreps are ``free-like'' or ``resourceful'', but also, through the factors $\tau_\lambda^{\pm}$ in the couplings
$T_{s_1,s_2,s_3}^{\lambda_1\lambda_2\lambda_3}$, controls how strongly those
sectors interact under the noncommutative product on QPS.

\subsection{Norm bounds for QPS representations from highest-weight purities}

The relation
\begin{equation}
    \widetilde{\mathcal P}_\lambda\big(F_\rho(\Omega,s)\big)
    = \tau_\lambda^{-s}\,\mathcal P_\lambda(\rho),
\end{equation}
together with the decomposition of the $L^2(\XC)$ norm into GFD purities (via the Plancherel Theorem~\cite{folland2016course}),
\begin{equation}
    \|F_\rho(\cdot,s)\|_{L^2}^2=\langle F_\rho(\cdot,s),F_\rho(\cdot,s)\rangle_{L^2}
    = \sum_\lambda \widetilde{\mathcal P}_\lambda\big(F_\rho(\Omega,s)\big),\nonumber
\end{equation}
immediately yields universal bounds on the total ``power'' of any QPS
representation in terms of the coefficients $\tau_\lambda$, and hence in
terms of the GFD purity profile of the highest-weight state.

For a general state $\rho$ we have
\begin{equation}
     \|F_\rho(\cdot,s)\|_{L^2}^2
    = \sum_\lambda \tau_\lambda^{-s}\,\mathcal P_\lambda(\rho),
\end{equation}
which, using $0 \le \mathcal P_\lambda(\rho)\le\Tr[\rho^2]$ and
$\sum_\lambda\mathcal P_\lambda(\rho)=\Tr[\rho^2]$, implies
\begin{equation}
    \min_\lambda \tau_\lambda^{-s}\,\Tr[\rho^2]
    \;\le\;
    \|F_\rho(\cdot,s)\|_{L^2}^2
    \;\le\;
    \max_\lambda \tau_\lambda^{-s}\,\Tr[\rho^2].\nonumber
\end{equation}
In particular, for a pure state $\rho=\dyad{\psi}$, where 
$\Tr[\rho^2]=1$ and $\sum_\lambda \mathcal P_\lambda(\dyad{\psi})=1$, we
obtain the sharper bounds
\begin{equation}
    \min_\lambda \tau_\lambda^{-s}
    \;\le\;
    \|F_{\dyad{\psi}}(\cdot,s)\|_{L^2}^2
    \;\le\;
    \max_\lambda \tau_\lambda^{-s}.
\end{equation}
Recalling from Eq.~\eqref{eq:tau} that $\tau_\lambda=\mathcal P_\lambda(\dyad{{\rm hw}})/d_\lambda$,
these inequalities show that, once the operator-space GFD purities of the
highest-weight state are known, the $L^2$ norm of any QPS
representation $F_\rho(\Omega,s)$ is confined to a state-independent
interval determined solely by the highest weight profile and the choice of $s$. The
irreps for which $\tau_\lambda^{-s}$ is largest (smallest) set the maximal
(minimal) possible overall intensity of phase space representations, identifying the symmetry sectors that can dominate or minimally contribute
to the total power of QPS functions across all quantum states.

\subsection{Additional connections to QRTs}
In the previous section we have discussed how the choice for the Cahill--Glauber  parameter $s$ serves as an effective filter in the Fourier decomposition of a quantum state $\rho$. Here we provide further interpretation to this phenomenon by realizing that when $s=-1$, then 
\begin{equation}
F_\rho(\Omega,-1)=\bra{\Omega}\rho\ket{\Omega}\,,
\end{equation}
which is nothing more than the fidelity between $\rho$ and the coherent states. That is, from a QRT perspective $F_\rho(\Omega,-1)$  can be understood as a comparison between $\rho$ and the orbit of free states with a low-pass filter.  Indeed, one can then re-write the coherent state fidelity of Eq.~\eqref{eq:fidelity} as
\begin{align}\label{eq:fidelity-2}
S(\ket{\psi})&=\max_{\Omega\in\XC}F_\rho(\Omega,-1)\,.
\end{align}

More generally, the irrep decomposition of $F_\rho(\Omega,-1)$ provides a new interpretation of the fidelity between $\rho$ and coherent states. In particular, when $F_\rho(\Omega,-1)$ has components in the high-dimensional irreps (despite their suppression as per the filter induced by the $s=-1$ kernel), this means that a small change in $\Omega$ gets translated into large fidelity changes. That is, high-irrep components can be expected to lead to highly oscilatory and sharp behaviour, meaning that $\rho$ varies rapidly as one compares it against the orbit of free states. On the other hand, for near-free states (or in the limiting case of $\rho$ actually being free), the fact that  $F_\rho(\Omega,-1)$ has support in the lower-dimensional irreps indicates a smoothness in the fidelity between $\rho$ and the coherent states. As such, one can expect that small changes in $\Omega$ get translated into small and continuous fidelity changes.  

Then, we can further push this analysis to other SW kernels with different  $s$ values via Eq.~\eqref{eq:kernel-Deltas}. That is,  
\begin{equation}
F_\rho(\Omega,s)=\displaystyle\int_{\XC} d\mu(\Omega')K_{s,-1}(\Omega,\Omega')\bra{\Omega'}\rho\ket{\Omega'}\,.
\end{equation}
As such, $F_\rho(\Omega,s)$ is related to the fidelity through the group‐covariant kernel $K_{s,-1}$ which filters out or amplifies particular irrep components. That is, all QPSs are ultimately just comparisons between $\rho$ and the orbit of free states obtained via distinct filters as in Eq.~\eqref{eq:map-Fs}.  

\section{Examples}

Here we showcase how QPSs serve as filters for the Fourier components of states across different QRTs. 

\subsection{Spin coherence}

We begin by considering the spin coherence QRT arising in quantum systems with total angular momentum $S$, whose dynamics are governed by the irreducible representation of $\SU(2)$. Notably, such states can be experimentally prepared (e.g., in nuclear magnetic resonance systems)~\cite{nielsen2000quantum,arecchi1972atomic}, and serve as a basis for macroscopic quantum information protocols~\cite{byrnes2015macroscopic,pyrkov2014quantum,nielsen2000quantum}. Not only is this example relevant, simple and illustrative, but it also allows for a direct phase space visualization. Let $\HC=\mathbb{C}^d={\rm span}_{\mathbb{C}}\{\ket{S,m}\}_{m=-S}^S$ be the spin $S=(d-1)/2$ irreducible representation of $\mathbb{G}=\SU(2)$, with associated Lie algebra $\mathfrak{g}=\mathfrak{su}(2)$. That is, $T(\mathfrak{g})={\rm span}_{\mathbb{R}}\{J_x,J_y,J_z\}$ with
\begin{align}
    J_z&={\rm diag}(S,S-1,\ldots,-S+1,-S)\,,\nonumber\\
    J_x&=\frac{J_++J_-}{2}\,,\quad J_y=\frac{J_+-J_-}{2i}    
\end{align}
and $J_\pm$ the standard spin raising and lowering operators. The highest weight of the algebra (associated to the Cartan $J_z)$ is the state $\ket{\rm hw}=\ket{S,S}$. Then, the space of operators $\LC(\HC)$ decomposes into $2S+1$ multiplicity-free irreps as
\begin{align}
    \LC(\HC)=\bigoplus_{\lambda=0}^{2S} V_\lambda\,,
\end{align}
where $V_\lambda$ is a spin-$\lambda$ irrep of dimension $\dim(V_\lambda)=2\lambda+1$, with orthonormal basis
\begin{equation}\label{eq:su2_irrep_basis}
D^\lambda_{j}= \sum_{j' =-S}^S (-1)^{S-(j'-j)} c^{S,S,\lambda}_{j',j-j',j} \ketbra{S,j'}{S,j'-j} \,.
\end{equation}
Above, $c_{j_1,j_2,j}^{s_1,s_2,s}$ are Clebsch-Gordan coefficients and $j=-\lambda,\cdots,\lambda$. The weight zero subspace $V_\lambda^0\subseteq V_\lambda$ is of dimension one, i.e., $d_\lambda^0=1$, as  $V_\lambda^0={\rm span}_{\mathbb{C}}\{D^\lambda_{0}\}$, and we find
\begin{equation}
    \bra{S,S}D^\lambda_{0}\ket{S,S}=c^{S,S,\lambda}_{S,-S,0}\,,
\end{equation}
from where, using Eq.\ \eqref{eq:tauweight0dims}, one finds 
\begin{equation}\label{eq:tau-su2}   \tau_\lambda=\frac{\left(c^{S,S,\lambda}_{S,-S,0}\right)^2}{2\lambda +1}\,.
\end{equation}

\begin{figure*}[th]
    \centering
    \includegraphics[width=1\linewidth]{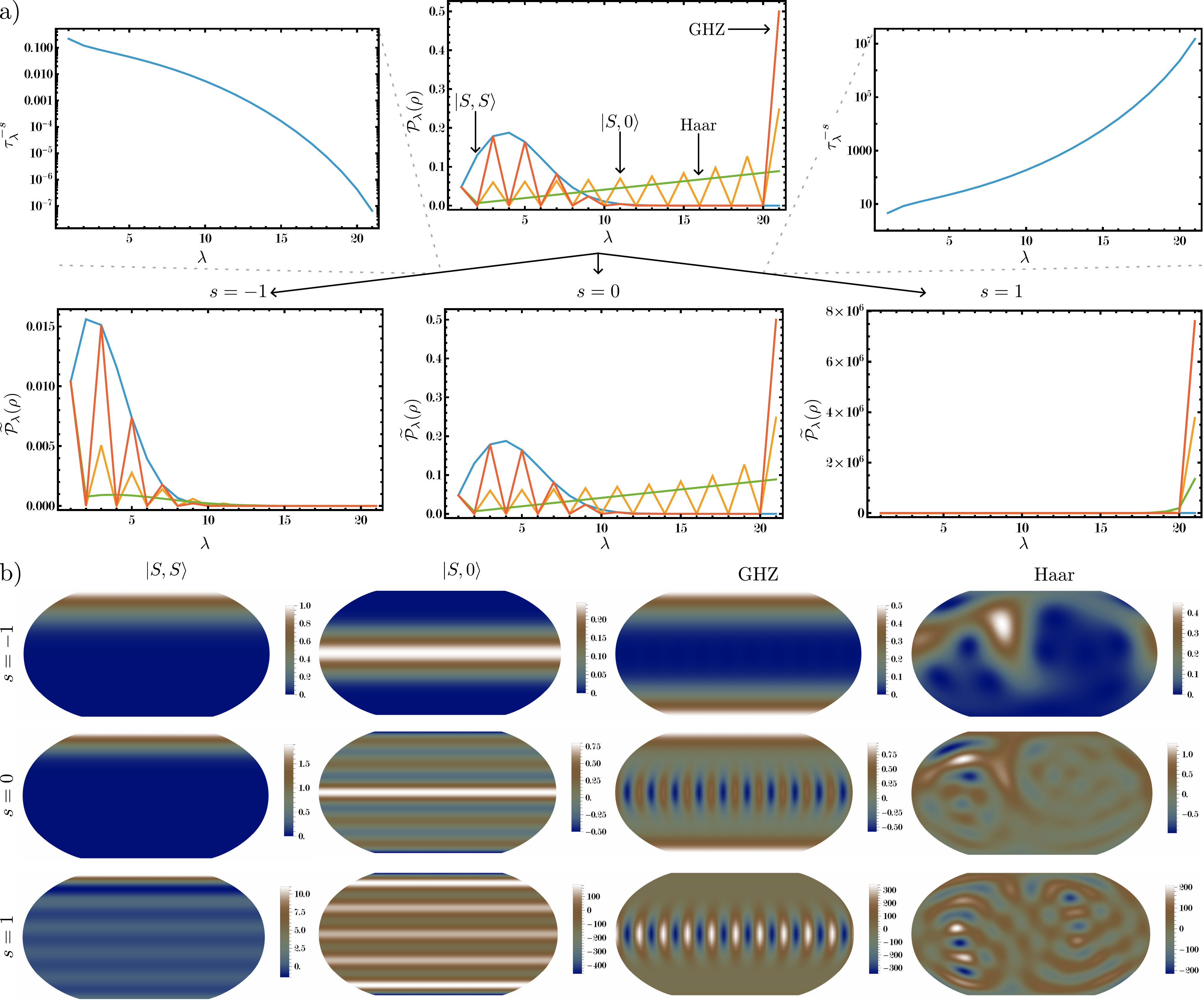}
    \caption{\textbf{SU(2) spin-coherence QPS for four states
($\ket{S,S}$, $\ket{S,0}$, GHZ, and a Haar-random pure state) and for $S=5$}.
(a, top) GFD purities $\PC_\lambda(\rho)$ in operator space $\LC(\HC)$. By mapping to the QPSs via SW kernels with $s=-1,0,1$, the components in each irrep get modulated by $\tau_\lambda^{-s}$ (see Eq.~\eqref{eq:tau-su2}). We show said modulation for $s=\pm1$ (as the $s=0$ case is trivial). 
(a, middle): Corresponding phase space purities
$\widetilde \PC_\lambda(\rho,s)=\tau_\lambda^{-s} \PC_\lambda(\rho)$ 
for $s=-1,0,1$, showing how the kernel reweights irreps.
(b) Robinson projection of phase space functions $F_\rho(\Omega,s)$ on
$\XC \simeq S^2 = \mathrm{SU}(2)/\mathrm{U}(1)$ for $s=-1,0,1$,
illustrating how changing $s$ (i.e.\ the kernel $\Delta(\Omega,s)$)
acts as a low-, neutral-, or high-pass filter and yields distinct
phase space representations of the same state.}
    \label{fig:fig-su(2)}
\end{figure*}

Then, we recall that the QPS is obtained by taking the quotient of $\mathbb{G}$ and the subgroup $\mathbb{K}$  which stabilizes $\ket{{\rm hw}}$. One can readily verify that $\mathbb{K}=U(1)$,  representation $e^{i\phi J_z}$, meaning that
\begin{equation}
    \XC=SU(2)/U(1)\simeq S^2\,.
\end{equation}
and indicating that the QPS is isomorphic to a sphere. This allows us readily parametrize $\XC$ by a polar and azimutal angles, i.e. $\Omega=(\theta,\phi)$ such that
\begin{equation}
    \ket{\Omega}=e^{-i\phi J_z}e^{-i\theta J_y}\ket{S,S}
\end{equation}
and the Haar measure over $\XC$ is then given
\begin{equation}
   \int_{\XC} d\mu(\Omega)\equiv\frac{1}{4\pi}\int_{0}^{2\pi} d\phi\int_{0}^{\pi} d\theta\sin(\theta)\,.
\end{equation}
Putting it all together, we find from Eq.~\eqref{eq:delta-explicit}
\begin{align}
\Delta((\theta,\phi),s)=e^{-i\phi J_z}e^{-i\theta J_y}\Delta((0,0),s)e^{i\theta J_y} e^{i\phi J_z}\,,
\end{align}
with
\begin{equation}
   \Delta((0,0),s)=\sum_{\lambda=0}^{2S}\frac{(2\lambda +1)^{(s+1)/2}}{\left(c^{S,S,\lambda}_{S,-S,0}\right)^{s}}D_{0}^\lambda\,.
\end{equation}

With the previous results in hand, we are now ready to understand how different choices of SW kernels lead to different representations of quantum states in the QPSs. In particular, we will consider the highest weight resource-free state $\ket{S,S}$, as well as the resourceful states $\ket{S,0}$, $|GHZ\rangle =\frac{\ket{S,S}+\ket{S,-S}}{\sqrt{2}}$ and a Haar random state $\ket{\psi_H}$. Crucially, beyond the highest weight $\ket{S,S}$, which is considered a free spin coherent state, all others possess resource in the associated QRT. As shown in~\cite{bermejo2025characterizing}, their GFD purities in $\LC(\HC)$ are as follows: Taking $\rho_{m}=\dya{S,m}$,  one obtains
\begin{align}\label{eq:pur-sm}
    \PC_{\lambda}(\rho_{m})=\left( c_{m,-m,0}^{S,S,\lambda}\right)^2\,.
\end{align}
For the GHZ state $\rho_{{\rm GHZ}}=\dya{{\rm GHZ}}$ with one finds
\begin{align}\label{eq:pur-ghz-su2}
    \PC_{\alpha}(\rho_{{\rm GHZ}})
    &=\begin{cases}\left( c_{m,-m,0}^{S,S,\alpha}\right)^2\,  &   \text{ if $\alpha$ is even}\\
    0\, & \text{if $\alpha$ is odd}
    \end{cases}\,,
\end{align}
and for $\alpha=2s$ 
\begin{align}
    \PC_{2s}(\rho_{{\rm GHZ}})=&\frac{1}{4}\left( c_{S,-S,0}^{S,S,2S}+(-1)^{2 S}  c_{-S,S,0}^{S,S,2S}\right)^2\nonumber\\
    &+\frac{1}{4} (c_{-S,-S,-2S}^{S,S,2S})^2 + (c_{S,S,2S}^{S,S,2S})^2\,.
\end{align}
Finally, for a Haar random state $\rho_H=\dya{\psi_H}$,  
\begin{align}\label{eq:su2-haar}
    \mathbb{E}_{\HC}[\mathcal{P}_{\alpha}(\rho_H)] = 
    \begin{cases}
        \dfrac{1}{d}\, & \text{if } \alpha = 0\,, \\
        \dfrac{d_\lambda}{d(d+1)}\, & \text{if } \alpha > 0\,,
    \end{cases}
\end{align}
as per Eq.~\eqref{eq:purity-Haar}. 

In Fig.~\ref{fig:fig-su(2)} we showcase different QPSs ($s=-1,0,1$) for the spin coherent QRT with $S=5$ for the aforementioned benchmark
states: $\ket{S,S}$, $\ket{S,0}$, $\ket{\mathrm{GHZ}}$ and $\ket{\psi_H}$.  First, in 
Panel~\ref{fig:fig-su(2)}(a, top center) we show the operator-space GFD purities
$P_\lambda(\rho)$ in $\LC(\HC)$ for each state. As anticipated by our previous discussions, we can see that the purities in the highest weight state (of Eq.~\eqref{eq:pur-sm})  are mostly concentrated in the lowest spins
$\lambda$. On the other hand, the resourceful $\ket{S,0}$ and the GHZ state populate progressively
higher irreps. Finally, the expected Haar-random purities display a flat profile growing as  $d_\lambda$.
The outer panels in the top row display the factors $\tau_\lambda$ and
$\tau_\lambda^{-1}$, computed from Eq.~\eqref{eq:tau-su2}, which quantify
how strongly each spin-$\lambda$ sector is attenuated or amplified when
moving to the $s=-1$ and $s=1$ QPs. As $\lambda$ increases, $\tau_\lambda$ decays
rapidly, while $\tau_\lambda^{-1}$ grows, making their role as low-pass and
high-pass filters explicit.

The middle row of Fig.~\ref{fig:fig-su(2)}(a) shows the corresponding
phase space purities $\widetilde P_\lambda(\rho,s)$. For $s=0$ the
$\widetilde P_\lambda(\rho,0)$ exactly coincide with the operator-space
purities, while for $s=-1$ the low-$\lambda$ sectors are enhanced and
high-$\lambda$ components are strongly suppressed: the kernel acts as a
genuine low-pass filter in the spin index. Conversely, for $s=1$ the
high-$\lambda$ sectors dominate, and the phase space representations of
states with substantial weight in the largest irreps (in particular GHZ and
the expected Haar purity) are almost entirely supported near the maximal spin
$\lambda=2S$, illustrating the high-pass character of the $s=1$ kernel.

Panel~\ref{fig:fig-su(2)}(b) displays the QPS functions
$F_\rho(\Omega,s)$ on $X\simeq S^2$ for the three
values $s=-1,0,1$ and for all considered states. We use a Robinson projection to map the sphere to a rectangle\footnote{Note that this projection is neither equal-area nor conformal, and the
distortion is severe near the poles but quickly decreases to moderate levels
away from them}. For the coherent free state $\ket{S,S}$ the $s=-1$ QPS is a
smooth, almost classical-looking bump localized near the north pole; as $s$
increases to $0$ and $1$, the representation broadens and develops mild
oscillations, reflecting the increased weight of higher spins. Indeed, the fact that the maximum value attained by $F_{\dya{{\rm hw}}}(\Omega,1)$ is of $10$ indicates that even with a very large high-pass filter, one cannot truly enhance the component in the large spin irreps for the highest weight state. For
$\ket{S,0}$ the phase space functions form band-like patterns symmetric
about the equator, whose spatial frequency increases with $s$, in line with
the growing contribution of higher-$\lambda$ components in
$\widetilde P_\lambda(\rho,s)$.

The GHZ state shows the strongest qualitative change with $s$. At $s=-1$ the
QPS is relatively smooth and concentrated in low-$\lambda$ sectors. At
$s=0$ a pronounced fringe pattern appears along the equator, and for $s=1$
the representation is almost entirely carried by the highest spin
$\lambda=2S$, giving rise to a highly oscillatory pattern that is essentially
the contribution of the $Y_{10,m}$ harmonics selected by the high-pass
filter. Finally, the Haar-random state $\ket{\psi_H}$ exhibits irregular,
speckle-like structures for all $s$, with increasing small-scale structure
as $s$ grows. Since we plot a single Haar realization rather than an
ensemble average, these patterns need not coincide with the mean profile
$\mathbb{E}_{\HC}[F_{\dya{\psi_H}}(\Omega,s)]$, but they qualitatively reflect the
fact that a typical Haar state has significant support across many irreps
and is therefore strongly affected by the $s=1$ high-pass filter while only
weakly smoothed by the $s=-1$ low-pass kernel.

These examples also illustrate a direct connection between the irrep content
of a state and the ``smoothness'' of its QPS representation. In the
$\SU(2)$ case, the harmonics $Y_j^\lambda(\Omega)$ are spin-$\lambda$
spherical harmonics on $S^2$, so larger $\lambda$ correspond to eigenfunctions
of the Laplace--Beltrami operator with larger eigenvalues and hence to
more oscillatory, rapidly varying patterns on the sphere. Weight in
high-dimensional irreps therefore translates into increasingly complex
and fine-grained structure in $F_\rho(\Omega,s)$, while dominance of
low-$\lambda$ components yields smooth, slowly varying phase space
functions. This behavior is clearly visible in Fig.~\ref{fig:fig-su(2)}:
the coherent state $\ket{S,S}$, whose GFD purities are concentrated in
small $\lambda$, produces a smooth bump (especially at $s=-1$), whereas the
randomly chosen Haar state, which has significant support across many irreps
and is further biased towards high $\lambda$ at $s=1$, gives rise to highly
structured, rapidly fluctuating patterns on $S^2$.

To finish, we also note that (although not necessarily visible from Fig.~\ref{fig:fig-su(2)}), the
components in higher-dimensional irreps can be directly linked to the appearance of negativity in the
$s=0$ QPS representation. Indeed, this already occurs for free (spin-coherent) states. Since $\ket{{\rm hw}}$ has strictly non-zero components in  the  higher spins
$\lambda$, and since those  are not suppressed by the $s=0$ kernel
(i.e., $\tau_\lambda^0 = 1$), their oscillatory harmonics are fully visible
and can drive $F_\rho(\Omega,0)$ slightly negative, despite $\rho$ being
free in the spin-coherence QRT. By contrast, the $s=-1$ representation is
a genuine low-pass filter matched to the free sector and yields a manifestly
positive Q-function for all coherent states.

This offers a new perspective on results such as those in~\cite{davis2021wigner}, where Wigner negativity is reported even for spin-coherent states: within our framework, this is not a
paradoxical sign of ``resource'' in free states, but rather a consequence of
the undamped contribution of high-$\lambda$ modes in the $s=0$ kernel.
Negativity of the spin Wigner function is therefore a witness of substantial
high-irrep content, but not, by itself, a faithful witness of resource in
the spin-coherence QRT. Our group Fourier filter picture clarifies that the
choice of $s$ determines whether a given QPS representation behaves more
like a resource-sensitive diagnostic ($s=1$) or a free-adapted,
nonnegative quasi-probability ($s=-1$).

Let us finally add that while we do not address the case of unbounded groups in this manuscript, we can still use these results to give some intuition on what to expect for bosons. In fact, for large $S$ we might do the identification $|\xi|\approx \frac{\lambda}{\sqrt{2 S}}$ for $|\xi|$ labeling different irreps for the Heisenberg-Weyl group. Indeed, this identification is justified by means of the Holstein–Primakoff map. One can then verify that for large $S$, $\tau_\lambda\simeq e^{-\lambda^2/2S}\equiv e^{-|\xi|^2/2}=\tau(|\xi|)$ with $\tau(|\xi|)$  the coefficient corresponding to a single bosonic mode \cite{brif1999phase}. The quantity $e^{-s|\xi|^2/2}$ has then the same qualitative behavior as the one exhibited in Fig.~\ref{fig:fig-su(2)} (for small $\lambda$), suggesting that the low/high-pass interpretation might be generalized to infinite dimensional quantum systems. 

\subsection{Other QRTs}
Given that we have already extensively showcased the effect of the QPS filters on the GFD purities for the spin coherence QRT, in this section we briefly present the basic ingredients needed to repeat the aforementioned analysis on two other widely used QRTs: multipartite entanglement and fermionic Gaussianity.

\subsubsection{Multipartite entanglement}

To begin, we consider the QRT of $n$-qubit multipartite entanglement~\cite{horodecki2009quantum,plenio2005introduction,beckey2021computable}, which plays a central role in quantum computation~\cite{ekert1998quantum,nielsen2000quantum,datta2005entanglement}, communication~\cite{bennett1993teleporting,barrett2002nonsequential,cleve1997substituting,gigena2017bipartite}, and sensing~\cite{paris2009quantum,huerta2022inference}. Let $\HC=(\mathbb{C}^2)^{\otimes n}$, and $\mathbb{G}=\mathbb{SU}(2)\times \mathbb{SU}(2)\times \cdots\times \mathbb{SU}(2)$, with $T$ the standard representation of each $\mathbb{SU}(2)$. That is, $T(g_1\times g_2\times \cdots \times g_n)=g_1\otimes g_2\otimes \cdots \otimes g_n$. The space of operators decomposes into $2^n$ multiplicity-free irreps as
\begin{equation}\label{eq:irreps-L-nq}
    \LC(\HC)=\bigoplus_{\lambda \in\{0,1\}^{\otimes n}} V_{\lambda}\,,
\end{equation}
where we can define an orthonormal basis of $V_{\lambda}$ as being composed of all operators taking the form $P_1^{\lambda_1}\otimes P_2^{\lambda_2}\otimes \cdots \otimes P_n^{\lambda_n}/\sqrt{2^n}$, for $P_i\in\{X,Y,Z\}$. From the previous, we see that $d_\lambda=3^{w(\lambda)}$, where $w(\lambda)$ is the Hamming weight of the bitstring $\lambda$. Then,  the weight-zero subspace $V_\lambda^0$ is composed of the operators $Z^\lambda=Z_1^{\lambda_1}\otimes Z_2^{\lambda_2}\otimes \cdots \otimes Z_n^{\lambda_n}/\sqrt{2^n}$, which means that $d_\lambda^0=1$. Since $\ket{{\rm hw}}=\ket{0}^{\otimes n}$ we find that $\forall \lambda\in\{0,1\}^{\otimes n}$
\begin{equation}
    \frac{1}{\sqrt{2^n}}\bra{0}^{\otimes n}Z_1^{\lambda_1}\otimes Z_2^{\lambda_2}\otimes \cdots \otimes Z_n^{\lambda_n} \ket{0}^{\otimes n}=\frac{1}{\sqrt{2^n}}\,,
\end{equation}
and hence
\begin{equation}
    \tau_\lambda=\frac{1}{3^{w(\lambda)} 2^n}\,.
\end{equation}
In Fig.~\ref{fig:tau-others}(a) we plot $\tau_\lambda^{-s}$ versus the Hamming weight $w(\lambda)$ and we again see that for $s=-1$ the QPS serves as a low-pass filter as the components on high-dimensional irreps are suppressed. There we can also see that the converse occurs for $s=1$.

The QPS is again obtained as the homogeneous space $\XC = \mathbb{G}/\mathbb{K}$, where $\mathbb{K}$ is
the stabilizer of the highest-weight state. One can readily see that the stabilizer of each local $\ket{0}$ is the representation of $\U(1)$ given by $z$-rotations, so that
\begin{equation}
    \mathbb{K} = \U(1)\times\cdots\times\U(1) \simeq \U(1)^n,
\end{equation}
and hence~\cite{heightman2025quantum}
\begin{equation}
    \XC = \mathbb{G}/\mathbb{K} \simeq
    \big(\SU(2)/\U(1)\big)^{\times n}
    \simeq (S^2)^{\times n}.
\end{equation}
That is, the associated QPS is an $n$-fold product of two-dimensional spheres. We can therefore parametrize
\begin{equation}\label{eq:Omega-ent}
    \Omega = (\Omega_1,\dots,\Omega_n),\qquad
    \Omega_k = (\theta_k,\phi_k)\in S^2,
\end{equation}
and define the corresponding product coherent states as
\begin{equation}
    \ket{\Omega}
    = \bigotimes_{k=1}^n
      \ket{\Omega_k},\qquad
    \ket{\Omega_k}
    = e^{-i\phi_k Z_k/2}e^{-i\theta_k Y_k/2}\ket{0}_k,\nonumber
\end{equation}
with Haar measure
\begin{equation}
    \int_X d\mu(\Omega)
    = \frac{1}{(4\pi)^n}
      \prod_{k=1}^n
      \left(
          \int_0^{2\pi}\!d\phi_k
          \int_0^{\pi}\!d\theta_k\,\sin\theta_k
      \right).
\end{equation}
Hence, from Eq.~\eqref{eq:delta-explicit}
\begin{align}
\Delta(\Omega,s)&=T(\Omega)\!\left(\sum_{\lambda\in\{0,1\}^{\otimes n}}\!\!(3^{w(\lambda)} )^{\frac{s+1}{2}}(2^n)^{\frac{s-1}{2}} Z^{\lambda}\right) \!T\ad(\Omega)\nonumber\\
&=T(\Omega)\left(\bigotimes_{k=1}^n2^{\frac{s-1}{2}}(I_k+3^{\frac{s+1}{2}}Z_k)\right) T\ad(\Omega)\,,
\end{align}
from where we can use Eq.~\eqref{eq:Omega-ent} and obtain
\begin{equation}
    \Delta(((\theta_1,\phi_1),\cdots,(\theta_n,\phi_n)),s)=\bigotimes_{k=1}^n\Delta((\theta_k,\phi_k),s)\,,
\end{equation}
with 
\begin{equation}\label{eq:delta-ent}
    \Delta((\theta_k,\phi_k),s)=R_z(\phi_k)R_y(\theta_k)\Delta((0,0),s)R_y\ad(\theta_k)R_z\ad(\phi_k)\,,\nonumber
\end{equation}
where
\begin{equation}
\Delta((0,0),s)=2^{\frac{s-1}{2}}\left(I+3^{\frac{s+1}{2}}Z\right)\,,    
\end{equation}
and where we defined $R_z(\phi_k)=e^{-i\phi_k Z_k/2}$ and $R_y(\theta_k)=e^{-i\theta_k Y_k/2}$. We can verify that replacing $s=-1$ in the previous equation recovers 
\begin{align}\
    \Delta((\theta_k,\phi_k),-1)&=\frac{1}{2}R_z(\phi_k)R_y(\theta_k)\left(I+Z\right)R_y\ad(\theta_k)R_z\ad(\phi_k)\nonumber\\
&=R_z(\phi_k)R_y(\theta_k)\dya{0}R_y\ad(\theta_k)R_z\ad(\phi_k)\nonumber\\
&=\dya{\Omega_k}\,.
\end{align}

\begin{figure}[t]
    \centering
    \includegraphics[width=1\columnwidth]{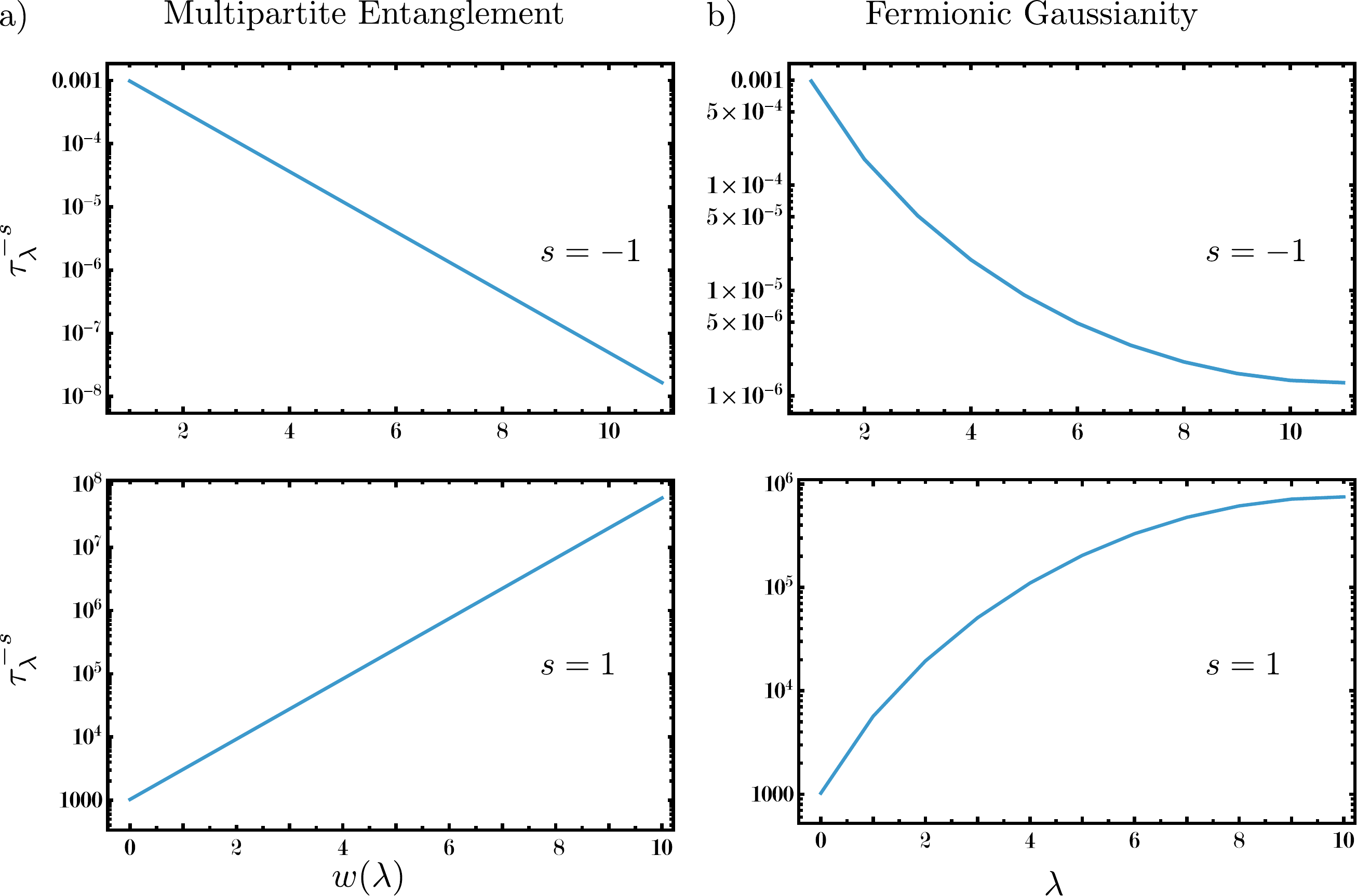}
    \caption{\textbf{Coefficient $\tau_\lambda^{-s}$ for the QRTs of multipartite entanglement  and fermionic Gaussianity  on $n=10$ qubits.} (a) For the QRT of multipartite entanglement we show $\tau_\lambda^{-s}$ versus the Hamming weight $w(\lambda)$ of the irrep label $\lambda\in\{0,1\}^{\otimes n}$. (b) For the fermionic Gaussinity we plot $\tau_\lambda^{-s}$ for $\lambda=0,2,4,\ldots,n$ (as $\lambda=n+1,\ldots,2n$ are the same due to the binomial coefficient symmetry). In both cases the top row corresponds to $s=1$, while the bottom one to $s=-1$.}
    \label{fig:tau-others}
\end{figure}

\subsubsection{Fermionic Gaussianity}

Next, we focus on the QRT of
fermionic Gaussianity in a system of $n$ spinless fermions~\cite{gigena2015entanglement,weedbrook2012gaussian,hebenstreit2019all,diaz2023showcasing,jozsa2008matchgates,mele2024efficient,dias2023classical,goh2023lie,brod2011extending}. Here, free operations are obtained by unitary evolutions generated by quadratic Hamiltonians (in terms of Dirac's creation and annihilation operators). These unitaries, also known as matchgates, constitute a  restricted model of quantum computing~\cite{valiant2001quantum,knill2001fermionic,terhal2002classical,divincenzo2005fermionic}, and the addition of non-Gaussian states, or equivalently of non-matchgate unitaries, can promote these evolutions  to  universal quantum computation~\cite{lloyd1999quantum, knill2001scheme, bartlett2002universal, menicucci2006universal, jozsa2008matchgates, 
ohliger2010limitations,brod2011extending,brod2014computational, oszmaniec2017universal, zhuang2018resource}. By leveraging the isomorphism between the Fock space of $n$ Majorana modes and the
$n$-qubit Hilbert space, we can set $\HC=(\mathbb{C}^2)^{\otimes n}$. We take $\mathbb{G}=\mathbb{SO}(2n)$ and $T$  the associated spinor representation~\cite{brauer1935spinors,kokcu2021fixed,kazi2024analyzing}. For convenience, we recall that we can define  the (Hermitian) Majorana fermionic operators,
\begin{equation}
\begin{split}
    c_1&=X\id\dots \id,\; c_3= ZX\id\dots \id, \;\dots,\; c_{2n-1}=Z\dots Z X\,, \nonumber\\
        c_2&=Y\id\dots\id,\; c_4= ZY\id\dots \id, \; \dots,\;\; c_{2n}\;\;\;=Z\dots Z Y\,,
\end{split}
\end{equation}
so that any unitary in $T(\mathbb{G})$ is obtained as $e^{\sum_{\mu\neq \nu}h_{\mu\nu}c_{\mu}c_{\nu} }$ with $h_{\mu\nu}$ a traceless antisymmetric real-valued matrix. The space of operators decomposes into irreps as  
\begin{equation}  \label{eq:irrep-L-ferm}  \LC(\HC)=\bigoplus_{\lambda=0}^{2n}V_\lambda\,,
\end{equation}
where $V_\lambda$ is spanned by the product of $\lambda$ distinct Majoranas. Hence, we know that $d_\lambda=\binom{2n}{\lambda}$. Then, we note that only the irreps $V_\lambda^0$ with $\lambda$ even contains weight-zero subspaces, in which it is composed of all $d_\lambda^0=\binom{n}{\lambda/2}$ products of local Pauli $Z$'s. For instance $V_2^0={\rm span}_{\mathbb{R}}\{\frac{1}{\sqrt{2^n}}Z_{j}\}_{1\le j\le n}$, $V_4^0={\rm span}_{\mathbb{R}}\{\frac{1}{\sqrt{2^n}}Z_{j_1}Z_{j_2}\}_{1\le j_1<j_2\le n}$, and so on. By noting that the highest weight state is $\ket{{\rm hw}}=\ket{0}^{\otimes n}$ we obtain
\begin{equation}
    \frac{1}{\sqrt{2^n}}\bra{0}^{\otimes n}Z_{j_1}Z_{j_2}\dots Z_{j_{\frac{\lambda}{2}}} \ket{0}^{\otimes n}=\frac{1}{\sqrt{2^n}}\,,
\end{equation}
and hence, using Eq.\ \eqref{eq:tauweight0dims} 
\begin{equation}
\tau_\lambda=\frac{\binom{n}{\lambda/2}}{\binom{2n}{\lambda} 2^n}\,.
\end{equation}
Interestingly, unlike the case of spin coherence and multipartite entanglement, where information in all the irreps moves onto the QPS, for fermionic Gaussianity the projection on the irreps with $\lambda$ odd is completely zeroed out. 

In Fig.~\ref{fig:tau-others}(b) we depict $\tau_\lambda^{-s}$ versus  $\lambda$ (for $\lambda$ even). As in the QRTs of spin coherence and multipartite entanglement, for $s=-1$ the QPS serves as a low-pass filter (with $s=1$ being a high-pass filter). Again, this is a consequence of free stating mostly living in low-dimensional irreps.

To construct the QPS, we note that the subgroup of unitaries $\mathbb{U}(n)\subseteq\mathbb{G}$ composed of particle-preserving unitaries such that $\forall g \in \mathbb{U}(n)$ then $[T(g),\sum_i Z_i]=0$, stabilizes the all-zero state. Hence, the QPS is~\cite{dangniam2018quantum}
\begin{equation}
    \XC=\mathbb{SU}(2n)/\mathbb{U}(n)\,.
\end{equation}
Now, while a direct parametrization of the group is possible, its form is not compact nor  straightforward. We refer the reader to~\cite{braccia2025optimal} for additional details on the Haar measure for matchgates.

Instantiating Eq.~\eqref{eq:delta-explicit} leads to 
\begin{align}
\Delta(\Omega,s)=T(\Omega)\Delta(e,s) T\ad(\Omega)\,,
\end{align}
with
\begin{equation}
    \Delta(e,s)\!=\!(2^n)^{\frac{s-1}{2}}\sum_\lambda\frac{\binom{2n}{\lambda}^{\frac{s+1}{2}} }{\binom{n}{\lambda/2}^{\frac{s+1}{2}}}\sum_{1\le j_1<\cdots j_{\frac{\lambda}{2}}\le n} \!\!\!\!\!Z_{j_1}Z_{j_2}\dots Z_{j_{\frac{\lambda}{2}}}.\nonumber
\end{equation}
Again, setting $s=-1$ leads to $\Delta(\Omega,s)=\dya{\Omega}$.

\section{Generalization beyond SW kernels}
Throughout this work we have focus on QPSs obtained from SW kernels as described in~\cite{brif1999phase}. In this section our analysis of QPS as signal-processing filters to mote general kernels beyond those explicitly parametrized by $s$. 

To begin, let us relax Definition~\ref{def:QPS-function} of a QPS function as follows: let $\hat{\Delta}:\LC(\HC)\to L^2(\XC)$ be a mapping with the following properties
\begin{enumerate}
  \item \textit{Linearity}: The mapping of $A\to \hat{F}_A$ via $\hat{\Delta}$ is linear,
  \item \textit{$*$-preserving}: The mapping satisfies $\hat{F}_A^*=\hat{F}_{A^\dagger}$,
  \item \textit{Standardization}: $\displaystyle\int_{\XC}d\mu(\Omega)\,\hat{F}_A(\Omega)=\Tr[A]$,
  \item \textit{Covariant}: $\hat{F}_{T^\dagger(g)AT(g)}(\Omega)=\hat{F}_A(g\cdot \Omega)$,
  \item \textit{Duality}: There exists a dual map\\ $\check{\Delta}:\LC(\HC)\to L^2(\XC)$ sending $B\to \check{F}_B$ which satisfies the previous properties and further satisfies $\displaystyle \int_{\XC} d\mu(\Omega)\hat{F}_{A^\dagger}(\Omega) \check{F}_B(\Omega)=\Tr[A^\dagger B]$.
\end{enumerate}
Note that properties $1-4$ exactly match those in Definition~\ref{def:QPS-function} as these do not depend explicitly on $s$, while property $5$ is a generalization to recover the inner product. We can again use Reisz's representation theorem to understand that $\hat{F}_A(\Omega)=\Tr[A\hat{\Delta}(\Omega)]$ as well as conclude analogous properties in Definition~\ref{def:SW-kernel}. Questions as to the existence of $\check{\Delta}$ are not the concern of this paper, though this is directly from the frame theoretic generalization of QPSs which is discussed in~\cite{ali2000coherent} and further in the companion work~\cite{coffman2026measurement}.

Given the previous generalized, one may ask ``\textit{What transformations preserve the QPS structure?}''. Indeed, this is equivalent to asking  ``\textit{What are the most general filters one could consider?}''. To answer this question, let us first analyze the transformation that maps us from the SW kernels $\Delta(\Omega,0)$ into a more abstract QPS kernel $\hat{\Delta}(\Omega)$. In particular, the following sequence of identities hold
\begin{align}
\hat{F}_A(\Omega) &= \Tr[A \hat{\Delta}(\Omega)]\nonumber\\
                  &= \int_\XC d\mu(\Omega') F_A(\Omega',0) \Tr[\hat{\Delta}(\Omega)\Delta(\Omega',0)]\nonumber\\
                  &= \int_\XC d\mu(\Omega')F_A(\Omega',0)\kappa(\Omega,\Omega')\,,
\end{align}
where we have defined the real kernel function\\ $\kappa:\XC \times \XC \to \RB$,
\begin{equation}
    \kappa(\Omega,\Omega')=\Tr[\hat{\Delta}(\Omega)\Delta(\Omega',0)]\,
    \end{equation}
by Hermiticity of $\hat{\Delta}$ and $\Delta(\Omega,0)$. One can readily verify that $\kappa$ is translation invariant under the action of $G$ as
\begin{align}
  \kappa(g\cdot \Omega, g\cdot \Omega')=\kappa(\Omega,\Omega'), \quad \forall g\in \mathbb{G};\quad  \Omega, \Omega'\in \XC. \nonumber
\end{align}
Therefore, for $\Omega=gH$ we always have that $\kappa(\Omega,\Omega')=\kappa(e,g^{-1}\cdot \Omega')$. Thus, we can always define $k(\Omega)=\kappa(e,\Omega)$ such that $\kappa(\Omega,\Omega')=k(\Omega^{-1}\cdot \Omega')$ where  the group action is given by $\Omega^{-1}\cdot\Omega'= g^{-1}\cdot \Omega'$ (note this is well-defined because $\kappa$ is symmetric). Note that $k:\XC\to \RB$ is $\mathbb{H}$-invariant by construction. Indeed, recall that a translation invariant kernel is always a convolutional kernel, in the sense that
\begin{align}
  \hat{F}_A(\Omega') &= \int_\XC d\mu(\Omega) F_A(\Omega,0)k(\Omega^{-1}\cdot \Omega')\nonumber\\
                     &= (F_A * k)(\Omega')\nonumber\\
                     &= \Tr\left[ A \int_\XC d\mu(\Omega) k(\Omega^{-1}\cdot \Omega') \Delta^{(s=0)}(\Omega)\right]\nonumber\\
                     &= \Tr[A (k* \Delta^{(s=0)})(\Omega')]\,.
\end{align}
Hence, the transformation from the $s=0$ SW kernel to a generalized kernel is given by convolution with $k$, or $\hat{\Delta}=k * \Delta^{(s=0)}$ which is defined pointwise. 

The previous analysis  has revealed that any generalized kernel $\hat{\Delta}$ can be obtained as the convolution of a real $H$-invariant function $k$ with any SW kernels (or any generalized kernel). For simplicity, we can always take the reference kernel to be  $\Delta(\Omega,s=0)$ (as it applies no QPS filtering). Naturally the definition of $\hat{\Delta}$ constraints the functions $k$ to being real valued and satisfy $\int_\XC d\mu(\Omega) k=1$ (the other properties are obeyed by convolution). Leveraging the convolution theorem we see that we can always find a $k$ with these properties such that
\begin{align}
  \hat{\Delta} = (k * \Delta^{(s=0)})= \sum_\nu (k* Y_\nu) D_\nu^\dagger = \sum_\nu k_\nu Y_\nu D_\nu^\dagger.\nonumber
\end{align}
In particular, one only needs to place a modest set of restrictions over $k_\nu \in \mathbb{GL}(d_\nu)$ so that the inverse kernel $\check{\Delta}$ exists. 
Formally, the previous equations shows the large freedom there exists in building free space functions, and thus filters for QPSs, beyond those explicitly parametrized by the SW construction. 

\section{Discussions}
The main takeaway of our results is a strong form of representation-theoretic rigidity for QPSs. Once the GFD purity profile of the reference free state in $\LC(\HC)$ is known, essentially all irrep-level information in the QPS is fixed, for all $s$, and even for typical states such as Haar-random pure states. Once this highest-weight profile is fixed, there is no further freedom at the irrep level: all SW kernels, phase space representations, and symbol calculi compatible with the QRT are just different “filterings’’ of the same underlying group–Fourier data. In this sense, QPSs inherit the same representation-theoretic structure that already organizes resources in operator space, and our work makes this inheritance completely explicit. 

From a resource-theoretic perspective, we show that the Cahill--Glauber parameter $s$ acquires a precise operational meaning in the form of a  tunable group–Fourier filter that interpolates between free-adapted low-pass representations at $s=-1$, an unfiltered spectrum at $s=0$, and
resource-sensitive high-pass representations at $s=1$. For near-free states, the $s=-1$ QPS faithfully preserves and accentuates the low-dimensional irreps where they live, while strongly attenuating high-dimensional components. For highly resourceful states, the situation is reversed: $s=1$ suppresses the low-dimensional “classical’’ modes and amplifies high-dimensional irreps, revealing fine-grained, rapidly varying features in the QPS. The $s$-duality in Proposition~\ref{prop:duality} shows that this filtering picture is not merely qualitative: the average phase space spectrum of a typical Haar-random pure state at parameter $s$ is directly proportional to that of the reference free state at $s+1$. Thus, the same family of kernels that is adapted to the free sector also rigidly organizes the phase space spectra of highly resourceful random states.

Our analysis also reinterprets several standard QPS constructions in group–Fourier language. The explicit form of the twisted (star) product shows that its coupling constants factor into a purely group-theoretic part, coming from trace overlaps of irrep basis elements, and a purely resource-theoretic part, given by powers of $\tau_\lambda$. This clean separation clarifies when the symbol algebra behaves in a “classical-like’’ way (when large-$\lambda$ channels are suppressed) and when it becomes highly sensitive to resourceful, high-dimensional components (when those channels are enhanced). Likewise, the universal $L^2(\XC)$ norm bounds expressed in terms of $\tau_\lambda$ show that the overall “power’’ of any QPS representation is confined to a narrow interval determined entirely by the highest-weight purities and the chosen $s$, independently of the state being represented.

Looking ahead, our results suggest several directions for further connecting QPSs and QRTs. On the conceptual side, one can ask to what extent other resource quantifiers (e.g., monotones, robustness-type measures, or asymptotic conversion rates) admit natural QPS expressions once the group–Fourier filter structure is taken into account. Since $F_\rho(\Omega,-1)$ coincides with fidelities against free coherent states, and all other $F_\rho(\Omega,s)$ are filtered versions thereof, it is natural to expect new QPS-based witnesses and diagnostics that interpolate smoothly between “free-friendly’’ and “resource-sensitive’’ regimes as $s$ is varied. On the representation-theoretic side, it would be interesting to extend our framework to non-compact or discrete groups.

Finally, our work places the recent revival of QPS ideas in modern quantum information and quantum machine learning on a firm representation-theoretic footing. Recent proposals to use QPSs for simulation, learning, and classical benchmarking rely crucially on how information is distributed over symmetry modes of the system, and our results
show that, for group-covariant QRTs, this mode structure is entirely controlled by the free state’s GFD purities and the choice of $s$. This could open the door to designing QPS-based algorithms and architectures that are explicitly “resource-aware’’—for example, filters tailored to suppress irreps that are easy to simulate classically while enhancing those that carry genuinely quantum resources. This can be achieved by changing the reference state, or the measure over $\XC$ (e.g., adding a character in the average over $\XC$ changes the projection from trivial irreps, onto other ones).  In this way, QPSs are promoted from a largely geometric visualization tool to a principled, representation-theoretic signal-processing framework for diagnosing, filtering, and ultimately exploiting quantum resources in practical information-processing tasks.
\section*{Acknowledgements}

We thank Chris Ferrie, Ninnat Dangniam, Nathan Killoran and Blas Manuel Rodriguez Lara for useful discussions. 
L.C. was supported by the U.S. Department of Energy (DOE) through a quantum computing program sponsored by the Los Alamos National Laboratory (LANL) Information Science \& Technology Institute. L.C. acknowledges additional support by the National Science
Foundation Graduate Research Fellowship under Grant No. 2140743 as well as the QSE G1 Fellowship at Harvard.
N.L.D. was supported by the Center for Nonlinear Studies at LANL. 
M.L and M.C. acknowledge support by the Laboratory Directed Research and Development (LDRD) program of LANL under project number 20260043DR. 
This work was also supported by the Quantum Science Center (QSC), a National Quantum Information Science Research Center of the U.S. DOE.
Any opinions, findings, and conclusions or recommendations expressed in this material are those of the author(s) and do not necessarily reflect the views of the National Science Foundation.

\bibliography{quantum}

\appendix
\setcounter{lemma}{0}
\setcounter{theorem}{0}
\setcounter{proposition}{1}

\onecolumngrid
\appendix
\section{Hilbert–space vs phase–space dictionary}
\label{app:dictionary}

For convenience, we summarize in Table~\ref{tab:dictionary} the main correspondences between the Hilbert space and phase space formalisms used in the paper.

\begin{table*}[t]
    \centering
    \begin{tabular}{|p{0.2\linewidth}|p{0.32\linewidth}|p{0.45\linewidth}|}
        \hline
        Concept & Hilbert space / operator space & Phase space / $L^2(\XC)$ \\
        \hline\hline
        State / operator &
        $\rho \in \LC(\HC)$, $A \in \LC(\HC)$ &
        $F_\rho(\Omega,s) = \Tr[\rho\,\Delta(\Omega,s)]$,
        $F_A(\Omega,s) = \Tr[A\,\Delta(\Omega,s)]$ \\
        \hline
        Inner product &
        Hilbert--Schmidt:
        $\langle A,B\rangle_{\mathrm{HS}}
        = \Tr[A^\dagger B]$ &
        $L^2$ inner product:
        $\langle f,g\rangle_{L^2(\XC)}
        = \int_\XC d\mu(\Omega)\, f^*(\Omega)\,g(\Omega)$,
        chosen so that
        $\langle A,B\rangle_{\mathrm{HS}}
        = \langle F_A,F_B\rangle_{L^2}$ \\
        \hline
        Group action &
        Adjoint action:
        $A \mapsto T(g) A T^\dagger(g)$ &
        Covariant action:
        $F_A(\Omega,s) \mapsto F_A(g^{-1}\!\cdot\!\Omega,s)$ \\
        \hline
        Irrep decomposition &
        $\LC(\HC) = \bigoplus_\lambda V_\lambda$ &
        $L^2(\XC) = \bigoplus_\sigma W_\sigma$ \\
        \hline
        Irrep bases &
        $\{D_j^\lambda\}_{j=1}^{d_\lambda}
        \subset V_\lambda$ orthonormal in HS &
        $\{Y_j^\lambda\}_{j=1}^{d_\sigma}
        \subset W_\sigma$ orthonormal in $L^2(\XC)$ \\
        \hline
        Mode coefficients &
        $A = \sum_{\lambda,j} a_j^\lambda D_j^\lambda$ &
        $F_A(\Omega,s)
        = \sum_{\sigma,j} \tilde a_j^\sigma(s) Y_j^\sigma(\Omega)$ \\
        \hline
        GFD purities &
        $P_\lambda(\rho)
        = \sum_j |\langle D_j^\lambda,\rho\rangle_{\mathrm{HS}}|^2$ &
        $\widetilde P_\lambda(F_\rho(\ ,s))
        = \sum_j |\langle Y_j^\lambda, F_\rho(\cdot,s)\rangle_{L^2}|^2$ \\
        \hline
        Relation between purities &
        --- &
        $\widetilde P_\lambda(F_\rho(\Omega,s))
        = \tau_\lambda^{-s} P_\lambda(\rho)$ \\
        \hline
        Operator multiplication &
        $AB$ &
        Twisted (star) product:
        $F_{AB}(\Omega,s) = (F_A \star_s F_B)(\Omega)$ \\
        \hline
    \end{tabular}
    \caption{Schematic dictionary between the Hilbert--space/operator
    description and the quantum phase-space description at fixed $s$.
    The SW kernels $\Delta(\Omega,s)$ implement the isometry between
    $\LC(\HC)$ and $L^2(\XC)$ and transport the irrep decomposition across
    the two settings.}
    \label{tab:dictionary}
\end{table*}

\section{Proof of Proposition~\ref{prop:irreps-Delta-LH}}

Here we present a proof of  Proposition~\ref{prop:irreps-Delta-LH}, which we recall for convenience. 

\begin{proposition}\label{prop:irreps-Delta-LH-AP}
    Let $\Delta(\Omega,s)$ be an SW kernel as defined in Eq.~\eqref{eq:SW-kernel}, then its GFD purities in the irreps of $\LC(\HC)$ are
\begin{equation}\label{eq:purity-SW-kernel-AP}
    \PC_\lambda(\Delta(\Omega,s))=\tau_\lambda^{-s} d_\lambda=\left(\frac{\mathcal{P}_\lambda(\dya{\rm hw})}{d_\lambda}\right)^{-s}d_\lambda\,.
\end{equation}
\end{proposition}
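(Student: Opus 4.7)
The plan is to read off $\PC_\lambda(\Delta(\Omega,s))$ directly from the explicit harmonic expansion of the kernel in Eq.~\eqref{eq:SW-kernel} and the orthonormality of the $D_j^\lambda$ basis. Since $\Delta(\Omega,s)=\sum_{\lambda,j}\tau_\lambda^{-s/2}Y_j^\lambda(\Omega)\,D_j^\lambda$ and $\langle D_j^\lambda,D_{j'}^{\lambda'}\rangle_{\LC}=\delta_{\lambda\lambda'}\delta_{jj'}$, the projection of $\Delta(\Omega,s)$ onto the $\lambda$-sector of $\LC(\HC)$ is just $\sum_{j}\tau_\lambda^{-s/2}Y_j^\lambda(\Omega)\,D_j^\lambda$. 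Plugging this into Definition~\ref{def:GFD-purities-LH} gives
\begin{equation}
    \PC_\lambda(\Delta(\Omega,s))=\sum_{j=1}^{d_\lambda}\bigl|\tau_\lambda^{-s/2}Y_j^\lambda(\Omega)\bigr|^2=\tau_\lambda^{-s}\sum_{j=1}^{d_\lambda}|Y_j^\lambda(\Omega)|^2.
\end{equation}

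The remaining step is to show that the ``diagonal'' sum $\sum_{j}|Y_j^\lambda(\Omega)|^2$ equals $d_\lambda$, independently of $\Omega$. The key observation is that this sum is $\mathbb{G}$-invariant: using Eq.~\eqref{eq:transform-Y} and the unitarity of the Adjoint representation $\Phi_{\rm Ad}^\lambda(g)$, one has $\sum_j |Y_j^\lambda(g^{-1}\!\cdot\!\Omega)|^2=\sum_j |Y_j^\lambda(\Omega)|^2$. Because $\XC=\mathbb{G}/\mathbb{K}$ is a homogeneous $\mathbb{G}$-space, this invariant function must be constant on $\XC$; call it $c_\lambda$. Integrating against $d\mu(\Omega)$ and using the orthonormality relation~\eqref{eq:ortogonal-Y} fixes the constant to $c_\lambda=d_\lambda$, after accounting for the normalization of $d\mu$ inherited from the resolution of the identity~\eqref{eq:completness}. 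Combining this with the previous display yields $\PC_\lambda(\Delta(\Omega,s))=\tau_\lambda^{-s}d_\lambda$, and substituting Eq.~\eqref{eq:tau} from Proposition~\ref{prop:tau} gives the stated form $(\mathcal P_\lambda(\dya{\rm hw})/d_\lambda)^{-s}d_\lambda$.

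Essentially the whole proof is bookkeeping; the only non-trivial ingredient is the ``addition theorem'' $\sum_j |Y_j^\lambda(\Omega)|^2=d_\lambda$, which is the one spot where the group-theoretic covariance of the harmonic basis is genuinely used. The main potential obstacle is simply being careful about the normalization of the invariant measure $d\mu$ (i.e., checking that the convention adopted after Eq.~\eqref{eq:completness} is consistent with the orthonormality~\eqref{eq:ortogonal-Y}), so that the constant produced by the invariance argument comes out to exactly $d_\lambda$ and not $d_\lambda$ divided by some volume factor. Once this normalization is nailed down, the result follows immediately.
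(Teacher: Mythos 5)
Your proposal is correct, and it reaches the key identity $\sum_{j}|Y_j^\lambda(\Omega)|^2=d_\lambda$ by a genuinely different route than the paper. Both proofs begin identically: project $\Delta(\Omega,s)$ onto $V_\lambda$ using orthonormality of the $D_j^\lambda$ and factor out $\tau_\lambda^{-s}$. For the remaining sum, the paper uses covariance of the kernel together with the conjugation-invariance of the GFD purities (Eq.~\eqref{eq:purity-inv}) to reduce to $\Omega=e$, and then evaluates $\sum_j|Y_j^\lambda(e)|^2$ by an explicit matrix computation involving the representation matrices at the identity and their unitarity. You instead prove an ``addition theorem'': the diagonal sum is $\mathbb{G}$-invariant by Eq.~\eqref{eq:transform-Y} and unitarity of $\Phi_{\rm Ad}^\lambda$, hence constant on the homogeneous space $\XC$, and the constant is fixed by integrating and invoking the orthonormality~\eqref{eq:ortogonal-Y}. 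Your route is arguably cleaner since it never needs the explicit form of $Y_j^\lambda(e)$ in terms of weight-zero components, but it does hinge on the total volume $\int_{\XC}d\mu(\Omega)$ being $1$ --- the point you correctly flag. Note that Eq.~\eqref{eq:completness} as literally written would give total volume $d$ (take the trace), which would yield $c_\lambda=d_\lambda/d$; however, the normalization the paper actually uses in its computations is total volume $1$ (this is what makes the Schur orthogonality step $\int_{\XC}d\mu\,(\Phi_{\rm Ad}^\lambda)_{l,l'}(\Phi_{\rm Ad}^\lambda)_{j,j'}=\delta_{l,j}\delta_{l',j'}/d_\lambda$ in the proof of Proposition~\ref{prop:tau} valid, and it matches the $\frac{1}{4\pi}$ convention in the $\SU(2)$ example). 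With that convention your constant is exactly $d_\lambda$ and the two proofs agree.
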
 
\begin{proof}
    We begin by recalling from Eq.~\eqref{eq:SW-kernel} the definition of the SW kernel,
\begin{equation}   \label{eq:SW-kernel-AP}
\Delta(\Omega,s)=\sum_\lambda\sum_{j=1}^{d_\lambda}\tau_\lambda^{-s/2}Y^\lambda_j(\Omega)D_j^\lambda\,,
\end{equation}
from where we find the  projection onto the $\lambda$
\begin{equation}   \Delta_\lambda(\Omega,s)=\sum_{j=1}^{d_\lambda}\tau_\lambda^{-s/2}Y^\lambda_j(\Omega)D_j^\lambda\,.
\end{equation}
Next, combining the covariance property of the kernel in Definition~\ref{def:SW-kernel}, along with the invariance of the GFD purities of Eq.~\eqref{eq:purity-inv} we find 
\begin{equation}
\PC_\lambda(\Delta(\Omega,s))=\PC_\lambda(T(\Omega)\Delta(e,s)T\ad(\Omega))=\PC_\lambda(\Delta(e,s))\,,
\end{equation}
which gives
\begin{equation}
    \PC_\lambda(\Delta(\Omega,s))=\tau_\lambda^{-s}\sum_{j=1}^{d_\lambda}|Y^\lambda_j(e)|^2\,.
\end{equation}

Next, we leverage Eq.~\eqref{eq:explicit}
\begin{align}\label{eq:explicit}
    \sum_{j=1}^{d_\lambda}|Y_{j}^\lambda(e)|^2&=\sum_{j=1}^{d_\lambda}\sum_{j',m\in J_\lambda^0}\frac{d_\lambda}{d_\lambda^0} [R_\lambda(e)]_{j',j}[R_\lambda^*(e)]_{m,j}=\sum_{j=1}^{d_\lambda}\sum_{j',m\in J_\lambda^0}\frac{d_\lambda}{d_\lambda^0} [R_\lambda(e)]_{j',j}[R_\lambda\ad(e)]_{j,m}=\sum_{j',m\in J_\lambda^0}\frac{d_\lambda}{d_\lambda^0} [R_\lambda(e)R_\lambda\ad(e)]_{j',m}\nonumber\\
    &=\sum_{j',m\in J_\lambda^0}\frac{d_\lambda}{d_\lambda^0} \delta_{j',m}=d_\lambda\,,
\end{align}
to find 
\begin{equation}
    \PC_\lambda(\Delta(\Omega,s))=\tau_\lambda^{-s} d_\lambda\,.
\end{equation}
\end{proof}

\section{Proof of Proposition~\ref{prop:irreps-rho-LX}}

Let us now present a derivation of Proposition~\ref{prop:irreps-rho-LX}, which we now restate:

\begin{proposition}\label{prop:irreps-rho-LX-Ap}
    Let $\rho$ be a quantum state and $F_{\rho}(\Omega,s)$ its phase space representation. Then, from Definition~\ref{def:GFD-purities-LX}, the irrep projections are non-zero only for the irreps $\sigma=\lambda$ that appear in the decomposition of $\LC(\HC)$ in Definition~\ref{def:GFD-purities-LH}. In this case we have
    \begin{equation}
        [F_{\rho}]_\lambda(\Omega,s)
        = \sum_{j=1}^{d_\lambda}
          \tau_\lambda^{-s/2}\,Y^\lambda_j(\Omega)\,
          \big\langle D_j^\lambda,\rho\big\rangle_{\LC}\,,
    \end{equation}
    and the ensuing purities
    \begin{align}
        \widetilde{\PC}_\lambda\big(F_{\rho}(\Omega,s)\big)
        &= \tau_\lambda^{-s}\,\PC_\lambda(\rho)
         = \left(\frac{\mathcal{P}_\lambda(\dya{{\rm hw}})}{d_\lambda}\right)^{-s}
           \PC_\lambda(\rho)\,.
    \end{align}
\end{proposition}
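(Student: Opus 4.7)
The plan is to exploit the fact that the explicit SW kernel expansion in Eq.~\eqref{eq:SW-kernel} already presents $F_\rho(\Omega,s)$ as a series in the orthonormal harmonic basis $\{Y_j^\lambda(\Omega)\}$ of $L^2(\XC)$, so the Fourier coefficients can be read off directly and Plancherel applied term by term. First I would substitute Eq.~\eqref{eq:SW-kernel} into the phase space representation $F_\rho(\Omega,s) = \langle \Delta(\Omega,s), \rho\rangle_\LC$, using linearity of the Hilbert--Schmidt inner product, to obtain
\begin{equation}
F_\rho(\Omega,s) = \sum_\lambda \sum_{j=1}^{d_\lambda} \tau_\lambda^{-s/2}\, \langle D_j^\lambda, \rho\rangle_\LC\, Y_j^\lambda(\Omega)\,.
\end{equation}
Crucially, the sum over $\lambda$ runs only over those irreps actually appearing in the decomposition of $\LC(\HC)$ from Definition~\ref{def:irreps-LH}, because the SW kernel expansion itself is restricted to those labels.

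Next I would invoke the orthonormality~\eqref{eq:ortogonal-Y} of the harmonic basis in $L^2(\XC)$ to compute the Fourier coefficients $\langle Y_j^\sigma, F_\rho(\cdot,s)\rangle_{L^2}$ for any $\sigma$ appearing in the irrep decomposition of $L^2(\XC)$ from Theorem~\ref{def:irreps-XC}. This inner product vanishes whenever $\sigma$ does not appear in the expansion of $\LC(\HC)$, which proves the first claim of the proposition. For $\sigma = \lambda$ in that decomposition the coefficient equals $\tau_\lambda^{-s/2} \langle D_j^\lambda, \rho\rangle_\LC$, and reassembling these coefficients against $\{Y_j^\lambda(\Omega)\}_{j=1}^{d_\lambda}$ as prescribed by Definition~\ref{def:GFD-purities-LX} yields precisely the stated form of $[F_\rho]_\lambda(\Omega,s)$.

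The purity statement then follows by squaring and summing the Fourier coefficients, giving
\begin{equation}
\widetilde{\PC}_\lambda\big(F_\rho(\Omega,s)\big) = \sum_{j=1}^{d_\lambda} \big|\tau_\lambda^{-s/2} \langle D_j^\lambda, \rho\rangle_\LC\big|^2 = \tau_\lambda^{-s} \sum_{j=1}^{d_\lambda} \big|\langle D_j^\lambda, \rho\rangle_\LC\big|^2 = \tau_\lambda^{-s}\, \PC_\lambda(\rho)\,,
\end{equation}
where the last equality is just Definition~\ref{def:GFD-purities-LH}. Substituting $\tau_\lambda = \PC_\lambda(\dya{{\rm hw}})/d_\lambda$ from Proposition~\ref{prop:tau} recovers the second equivalent form in the statement.

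The only subtle step is justifying that the $\{Y_j^\lambda\}$ used in Eq.~\eqref{eq:SW-kernel} are genuinely orthonormal \emph{across} different irreps and multiplicities in $L^2(\XC)$, and not merely spanning the correct sector up to rescaling. This is precisely what the normalization condition~\eqref{eq:tau0} and the computation in the proof of Proposition~\ref{prop:tau} ensure, by fixing the coefficient $\tau_\lambda$ so that Eq.~\eqref{eq:ortogonal-Y} holds. Once this point is secured, the entire proposition reduces to a one-line reading of Fourier coefficients followed by Plancherel on each $\lambda$-isotypic component, so the main obstacle is conceptual rather than computational: recognizing that the SW construction was already engineered so that its kernel expansion \emph{is} the GFD of any $F_A(\Omega,s)$ in $L^2(\XC)$.
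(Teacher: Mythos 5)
Your proposal is correct and follows essentially the same route as the paper's proof: substitute the SW kernel expansion of Eq.~\eqref{eq:SW-kernel} into $F_\rho(\Omega,s)=\langle\Delta(\Omega,s),\rho\rangle_{\LC}$, read off the Fourier coefficients against the orthonormal harmonics to identify the nonvanishing irrep projections, and apply Plancherel on each sector to obtain $\widetilde{\PC}_\lambda = \tau_\lambda^{-s}\PC_\lambda(\rho)$. Your closing remark on the orthonormality of the $Y_j^\lambda$ across irreps is the same point the paper secures via Eq.~\eqref{eq:ortogonal-Y} and the normalization fixed in Proposition~\ref{prop:tau}, so nothing is missing.
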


\begin{proof}
    We start by combining the definition of the SW kernel in Eq.~\eqref{eq:SW-kernel} (or Eq.~\eqref{eq:SW-kernel-AP}) and of the function $F_\rho$ 
    \begin{equation}
F_\rho(\Omega,s)=\Tr[A\Delta(\Omega,s)]\,.
\end{equation}
to obtain 
\begin{equation}   F_\rho(\Omega,s)=\sum_\lambda\sum_{j=1}^{d_\lambda}\tau_\lambda^{-s/2}Y^\lambda_j(\Omega)\langle \rho,D_j^\lambda\rangle_\LC\,.
\end{equation}

Next, we can compute its projection onto the $\sigma$-th irrep, denoted as $[F_\rho]_{\sigma}(\Omega)$ as
    \begin{equation}
        F^{\sigma}(\Omega)=\begin{cases}\sum_{j=1}^{d_\sigma}\tau_\sigma^{-s/2}Y^\sigma_j(\Omega)\langle \rho,D_j^\sigma\rangle_\LC\quad \text{if the irrep labeled by $\sigma$ appears in the irrep decomposition of $\LC(\HC)$}\\
        0,\quad \text{otherwise.}
        \end{cases}
    \end{equation}
Combining the previous with Definition~\ref{def:GFD-purities-LX}, we obtain that the non-zero purities are\footnote{To indicate that the purities are non-zero, we will use the notation $\sigma=\lambda$.}
\begin{equation}
    \widetilde{\PC}_\lambda(F_{\rho}(\Omega,s))=\sum_{j=1}^{d_\lambda}\tau_\lambda^{-s}\left\langle D_j^\lambda,\rho\right\rangle_{\LC}^2=\tau_\lambda^{-s}\PC_\lambda(\rho)\,,
\end{equation}
where we used the fact that $\langle \rho,D_j^\lambda\rangle_\LC=\langle D_j^\lambda,\rho\rangle_\LC\in\mathbb{R}$ as the bases of $\LC(\HC)$ are taken to be Hermitian. 
\end{proof}
\end{document}